\begin{document}

\mainmatter  

\title{Fast phylogenetic tree reconstruction using locality-sensitive hashing }
\titlerunning{Fast phylogenetic tree reconstruction using locality-sensitive hashing}
\author{Daniel G. Brown \and Jakub Truszkowski}
\authorrunning{Daniel G. Brown and Jakub Truszkowski}
\institute{David R. Cheriton School of Computer Science\\
University of Waterloo\\
Waterloo ON N2L 3G1 Canada\\
\email{{browndg,jmtruszk}@uwaterloo.ca}}

\maketitle

\begin{abstract}
We present the first sub-quadratic time algorithm that with high probability correctly reconstructs phylogenetic trees for short sequences generated by a Markov model of evolution.  Due to rapid expansion in sequence databases, such very fast algorithms are becoming necessary.  Other fast heuristics have been developed for building trees from very large alignments~\cite{fasttree,WABI}, but they lack theoretical performance guarantees.
Our new algorithm runs in $O(n^{1+\gamma(g)}\log^2n)$ time, where $\gamma$ is an increasing function of an upper bound on the branch lengths in the phylogeny, the upper bound $g$ must be below$\frac{1}{2}-\sqrt{\frac{1}{8}} \approx 0.15$, and $\gamma(g)<1$ for all $g$. For phylogenies with very short branches, the running time of our algorithm is close to linear. For example, if all branch lengths correspond to a mutation probability of less than $0.02$, the running time of our algorithm is roughly $O(n^{1.2}\log^2n)$. Via a prototype and a sequence of large-scale experiments, we show that many large phylogenies can be reconstructed fast, without compromising reconstruction accuracy.
\\
{\bf Keywords:} phylogeny, phase transition, distance method, locality-sensitive hashing
\end{abstract}
\newpage
\section{Introduction}

Phylogenetic reconstruction is a core bioinformatics problem.  Existing algorithms for the problem often require very long running times, particularly when the number of sequences is large; this problem is especially acute for traditionally slow, yet accurate, methods like maximum likelihood.  As biologists start to need trees for hundreds of thousands of taxa~\cite{greengenes}, even traditionally faster distance methods, such as neighbour joining, become unacceptably slow.  Researchers have developed sub-quadratic time heuristics~\cite{fasttree,WABI}, but they lack theoretical performance guarantees, so it is unclear whether their use is universally appropriate.

Here, we give an algorithm to correctly reconstruct, with high probability, phylogenetic trees that come from a Markov model of evolution, in  sub-quadratic time using sequences of $O(\log n)$ length. King has proved an $\Omega(n^2\frac{\log \log n}{\log n})$ lower bound on the running time of all distance-based phylogeny reconstruction algorithms using such short sequences; our algorithm avoids the problem by using the sequences directly, rather than relying only on distance calculations. 

Our algorithm is based on three ideas.  First, we use locality-sensitive hashing\cite{IndykMotwani} to find sequences that are near-neighbours in the tree, in sublinear time.  This hashing is a first step in choosing which two positions should be joined in the tree we incrementally are building.  Second, we use reliable estimates of distance, to identify exactly the correct join at each step; this step involves some hoary computation, due to the need to ensure that inferred sequences are independent estimates.  And finally, we use ancestral sequence reconstruction to reliably approximate the sequences found at internal tree nodes.  Since we start with a forest with each taxon in its own tree, and perform this joining step until only one tree remains in the forest, the overall runtime is sub-quadratic.  Specifically, if $p$ is an upper bound on the mutation probability on any edge, and $p<1/2-\sqrt{1/8}$, then we show that we can do the locality-sensitive hashing, which is the runtime-determining step, at each step in $O(n^{\gamma(p)} \log^2 n)$ time, where $\gamma(p)$ is always less than 1; the overall runtime is thus $O(n^{1+\gamma(p)}\log^2n)$.

\section{Related work}

Our work spans two recent threads in phylogenetic research: theoretical algorithms with guarantees of performance, and practical algorithms with no guarantees of performance.  

\subsection{Principled phylogenetic algorithms}
Erd{\H o}s {\em et al.}~\cite{erdos} gave an $O(n^4\log n)$ algorithm that reconstructs a phylogeny with high probability, assuming the Cavender-Farris model of evolution, for sufficiently long sequences. For most trees, their algorithm runs in $O(n^2\text{poly} \log n)$ time and requires $O(\text{poly} \log n)$ sequence length. Cs{\H u}ros~\cite{csuros} provided a $O(n^2)$ algorithm with similar performance guarantees. Recent papers~\cite{gronau,mossel} give similar algorithms to identify parts of the tree that can be reconstructed. These approaches use quartet queries chosen so that, with high probability, only correct quartets are queried. The only sub-quadratic time algorithm with guarantees on reconstruction accuracy is by King {\it et al.}~\cite{king}; for most trees, its running time is $O(n^2\frac{\log \log n}{\log n})$ provided that the sequences are $O(\text{poly} \log n)$ in length.

King {\it et al.} also showed that any algorithm that reconstructs the true tree with high probability, and uses distance calculations as its only source of information about the phylogeny, will have $\Omega(n^2\frac{\log \log n}{\log n})$ running time, for sequences of length $O(\text{poly}\log n)$.

Mossel~\cite{MosselPhase} gave a {\it phase transition} for phylogenetic reconstruction.  Suppose we have a balanced phylogeny with all edges having mutation probability $p$.  If $p$ is less than $\frac{1}{2}-\sqrt{\frac{1}{8}} \approx 0.15$, then the phylogeny can be reconstructed correctly from sequences of length $O(\log n)$ sequence data. For larger $p$, sequences must be of polynomial length to allow constant probability of reconstructing the phylogeny. Daskalakis, Mossel and Roch~\cite{Roch} extended the phase transition result to unbalanced trees, and provided an $O(n^5)$ algorithm that reconstructs phylogenies with lengths below the phase transition. Mihaescu, Hill and Rao~\cite{Mihaescu} provided an $O(n^3)$ algorithm for the same problem, which bears some resemblance to our approach.

\subsection{Practical algorithms}

In parallel to these theoretical results, many researchers have developed phylogeny reconstruction algorithms that can analyze alignments of tens, or even hundreds, of thousands of taxa. Fast Neighbour Joining~\cite{elias} runs in $O(n^2)$ time and gives results similar to neighbour joining, which requires $O(n^3)$ time. FastTree~\cite{fasttree} reconstructs phylogenies without computing the full distance matrix, which results in $O(n^{1.5}\log n)$ runtime. Our own recent algorithm, QTree~\cite{WABI,Almob}, runs in $O(n\log n)$ time, using an incremental approach to building trees. No theoretical guarantees exist for the quality of solutions obtained from these fast algorithms, however, under realistic assumptions about the evolutionary model, for short sequences.

\section{Preliminaries}

\subsection{Phylogenetic trees}
A phylogeny is an unrooted, weighted tree whose internal vertices all have degree $3$ and whose leaves represent extant taxa. The weights represent evolutionary time. Evolution is modelled as a time-reversible Markov process operating on the edges of the tree, where each position of a sequence evolves independently of all others. 

A {\it quartet} is a phylogeny on four taxa. For a set $\left\{ a,b,c,d \right\}$ of taxa, there are three possible quartets which we denote as $ab|cd$,$ac|bd$ and $ad|bc$.

We assume the Cavender-Farris model of binary character states over $\pm 1$ evolving according to a continuous Markov process.  Each edge $e$ is labelled with length $\ell(e)$, and the probability that the ends of $e$ have different states is $p(e)=\frac{1}{2}[1-\exp(-2\ell(e))]$. If two sequences  differ in a $\hat{p}$ fraction of sites, the maximum likelihood estimator of the distance between them is $\hat{d}=-\frac{1}{2}\log(1-2\hat{p})$. 

We assume there exist constants $f$ and $g$ such that for each edge $e$ in the phylogeny, we have $f<\ell(e)<g<\ln 2/4$.  This gives a minimum length for each edge, and also gives each edge state change probability less than $1/2-\sqrt{1/8}$, which guarantees a bounded probability of error when reconstructing ancestral sequences~\cite{Roch}.  We also assume that all edge lengths are multiples of some constant $\Delta$, consistent with previous work~\cite{Roch}. With these assumptions in place, a surprising fact arises: with sequences of length $O(\log n)$, we can exactly identify the tree distance between close nodes in the phylogeny~\cite{Roch,Mihaescu}.

\begin{theorem}
\label{dist-concentration}
Let $\Delta \leq f'<g'<\infty$. Then there exists a constant $c(f',g',\Delta)$ such that if $f'<d(a,b)<g'$ and $d(a,b)$ is a mutliple of $\Delta$, we have
$$
\Pr[|d(a,b)-\hat{d}(a,b)|>\frac{\Delta}{2}] \leq \exp(-k/c(f',g',\Delta))
$$where $k$ is the sequence length.

In particular, if $k = 3c(f',g',\Delta)\ln n$, we can identify the correct distance with probability at least $1-n^{-3}$.
\end{theorem}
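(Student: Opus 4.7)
The plan is to derive the bound in two steps: first apply Hoeffding's inequality to the empirical per-site mismatch rate $\hat{p}$, and then transfer the concentration from $\hat{p}$ to $\hat{d} = -\tfrac12\log(1-2\hat{p})$ via a local Lipschitz estimate. Writing $\hat{p} = k^{-1}\sum_{i=1}^k X_i$ with $X_i$ i.i.d.\ Bernoulli of mean $p = \tfrac12(1-e^{-2d(a,b)})$, Hoeffding gives
$$\Pr\bigl[|\hat{p}-p|>t\bigr] \le 2\exp(-2kt^2)$$
for every $t>0$. The upper bound $d(a,b)<g'$ forces $p < p_{\max} := \tfrac12(1-e^{-2g'}) < \tfrac12$, so the true mismatch probability is a fixed positive distance from the singularity at $1/2$ where $\phi(p) := -\tfrac12\log(1-2p)$ blows up.

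For the transfer step, I would exploit the identity $\phi'(p) = (1-2p)^{-1}$: on any interval $[0,q]$ with $q<\tfrac12$, $\phi$ is Lipschitz with constant $L_q = (1-2q)^{-1}$. Fix $q := \tfrac12(p_{\max} + \tfrac12)$ and put $t_0 := \min\bigl(\Delta/(2L_q),\,(\tfrac12 - p_{\max})/2\bigr)$. If $|\hat{p}-p|\le t_0$ then simultaneously $\hat{p}\le q$ (so $\hat{d}$ is well-defined) and the mean value theorem gives $|\hat{d}-d(a,b)|\le L_q t_0 \le \Delta/2$. Choosing $c(f',g',\Delta)$ large enough to absorb the leading factor of $2$ into the exponent of the Hoeffding bound then yields $\Pr[|\hat{d}-d(a,b)|>\Delta/2]\le\exp(-k/c)$; substituting $k=3c\ln n$ gives the $n^{-3}$ corollary, and since $d(a,b)$ is assumed to be a multiple of $\Delta$, rounding $\hat{d}$ to the nearest such multiple recovers $d(a,b)$ exactly on that event. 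The parameter $f'$ plays no role in this particular argument and appears only to fix context; the tail depends on $g'$ (through $p_{\max}$) and on $\Delta$.

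The genuinely delicate point is that $\phi$ is not globally Lipschitz and $\hat{d}$ is infinite whenever $\hat{p}\ge\tfrac12$, so a direct Lipschitz step on $\phi(\hat{p})$ is illegitimate without some truncation. The clamp $t_0 \le (\tfrac12 - p_{\max})/2$ is exactly what handles this: on the event where Hoeffding succeeds with tolerance $t_0$, $\hat{p}$ lies inside the safe interval $[0,q]$, the Lipschitz estimate applies cleanly, and on the complementary event, whose probability is already at most $\exp(-k/c)$, we simply concede. Once this mild truncation is in place, the remainder is standard Hoeffding-and-mean-value bookkeeping and I expect no further conceptual obstacle.
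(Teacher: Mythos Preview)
The paper does not actually supply a proof of this theorem: it is stated as background, attributed to prior work (Daskalakis--Mossel--Roch and Mihaescu--Hill--Rao), and then used without further argument. So there is no in-paper proof to compare your proposal against.

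Your argument is the standard one and is sound. Hoeffding on the i.i.d.\ per-site mismatch indicators controls $\hat{p}$, and the local Lipschitz bound $\phi'(p)=(1-2p)^{-1}\le(1-2q)^{-1}$ on $[0,q]$ transfers the concentration to $\hat{d}$; your clamp $t_0\le(\tfrac12-p_{\max})/2$ correctly keeps $\hat{p}$ away from the singularity on the good event, which is the only subtle point. Your observation that the lower bound $f'$ is not actually used in the tail estimate is also correct: only $g'$ (to keep $p$ bounded away from $1/2$) and $\Delta$ (to set the target accuracy) enter.

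One cosmetic remark: absorbing the leading factor $2$ from Hoeffding into the exponent, i.e.\ passing from $2e^{-2kt_0^2}$ to $e^{-k/c}$, requires $k$ to exceed a constant depending on $t_0$; for very small $k$ the stated inequality can fail (indeed for $k=1$ the bad event can have probability $1$). This is a defect of the theorem \emph{statement} as written rather than of your method, and it is irrelevant for the intended regime $k=\Theta(\log n)$, where the absorption goes through for all $n\ge 2$.
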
 

Note also that this theorem applies to any distances in our trees below a constant times $g$, the upper bound on a single edge length.


\subsection{Locality-sensitive hashing}

Our algorithm requires finding pairs of sequences within a specified small distance from each other, without having to compute all pairwise distances.  Indyk and Motwani~\cite{IndykMotwani} solved this problem using a collection of randomized hash tables: enough hash tables are chosen so that close sequences  likely collide in one of the tables, while keys are long enough that distant sequences do not.  This idea, known as {\it locality-sensitive hashing}, has been applied to many problems in bioinformatics, such as motif finding ~\cite{Buhler}.

Specifically, Indyk and Motwani solve a related problem, the $(r_1,r_2)$-approximate Point Location in Equal Balls ($(r_1,r_2)$-PLEB): 

{\bf Input}: A set of sequences $P$ in $\left\{ 0,1 \right\}^d$, a query sequence $q$, and radii $r_1<r_2$

{\bf Output}: If there exists a sequence $p \in P$ within normalized Hamming distance $r_1$ from $q$, output ``yes'' and a sequence within $r_2$ of $q$. If there is no sequence in $P$ within normalized Hamming distance $r_2$ from $q$, output ``no''. Otherwise, output either ``yes'' or ``no''.

Indyk and Motwani's solution constructs $n^{r_1/r_2}$ hash tables, each keyed on $O(\log n)$ randomly chosen sequence positions. Given $q$, a point within distance $r_1$ of it has a constant probability of colliding with it $q$ each hash table,  while points further than $r_2$ from $q$ have $O(1/n)$ probability of colliding. After inspecting a constant number of collisions with $q$, we can find, with constant probability, a point whose distance from $q$ is at most $r_2$; if we boost by running $O(\log n)$ times independently, the success probability is $1-n^{-\alpha}$, for any choice of $\alpha$. For more details, see~\cite{IndykMotwani}. Overall, finding an $(r_1,r_2)$-approximate near neighbour for a query point $q$ with high probability takes $O(n^{r_1/r_2} \log n)$ hash table lookups, each on a key of length $O(\log n)$ bits.

The reason Indyk and Motwani solve the approximate PLEB problem is that their hash table solution may be overwhelmed by points within the region greater than $r_1$ but less than $r_2$ away from $q$.  In our domain, we can avoid this problem, and find all sequences within distance exactly $r$ from a given query: we choose $r_2$ to be small enough that at most $O(\log n)$ points are found within even the $r_2$ distance, so we can examine all of them and still have fast runtimes.  We do this by choosing $r_2$ to be $1/2 - 1/2(\exp(-2cf\log \log n)$, the relative Hamming distance corresponding to all sequences within evolutionary distance $cf \log \log n$, for a constant $c$ that incorporates errors arising from reconstructing internal sequences of the tree (see Section 3.3 and the Appendix for details).  In $\log \log n$ edges, we can reach $O(\log n)$ nodes.  The distance $r_2$ converges to $1/2$ as $n$ grows (though it is quite a bit smaller for smaller values of $n$), so in the limit, the number of hash tables grows to $n^{2r_1}$.

Finding all neighbours within $r_1$ normalized Hamming distance thus takes $O(n^{2r_1+\epsilon}\log^2 n)$ time with high probability, where $\epsilon \rightarrow 0$ as $n$ increases: we use $O(n^{2r_1+\epsilon}\log n)$ hash tables, each of which requires $O(\log n)$ time to examine, and we take $O(\log ^2n)$ time examining the hash table hits.

In what follows, we will use the hashing algorithm to find sequences within evolutionary distance $d$, implicitly relying on the simple correspondence between Hamming and evolutionary distances outlined in the previous section: our procedure $FindAllClose(q,d)$ finds all sequences within evolutionary distance $d$ of $q$ with probability $1-o(1/n^3)$, and so with high probability makes no errors during the course of running our entire algorithm.

\subsection{Four-point method}

To identify the correct place to join two trees, we will use the four-point method, which reconstructs quartets from the six pairwise distance estimates. The method computes
$\hat{d}(a,b)+\hat{d}(c,d),\hat{d}(a,c)+\hat{d}(b,d)$, and $\hat{d}(a,d)+\hat{d}(b,d)$, and outputs $ab|cd$, $ac|bd$ or $ad|bc$, respectively, depending on which is the minimum.  If all pairwise distances were estimated exactly, the two sums corresponding to incorrect topologies would both be $2\ell(e)$ greater than the sum corresponding to the correct topology, where $e$ is the middle edge of the true quartet.

Because in our setting, we can estimate distances exactly with high probability, we can assume that all quartets are properly computed, provided that the sequences used to compute them satisfy the {\it error-independence} property, defined in the next subsection.

\begin{theorem}
\label{thm-four-point}
Let $f$ and $g$ be the upper and lower bounds on the edge length in a quartet tree. Then there exists a constant $c_2(f,g,\Delta)$ such that we can reconstruct the lengths of the edges of the quartet exactly using sequences of length $c_2(f,g,\Delta)\log n$ with probability at least $1-\frac{6}{n^3}$.
\end{theorem}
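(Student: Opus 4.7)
The plan is to reduce the theorem to Theorem \ref{dist-concentration} via a union bound on the six pairwise distances of the quartet, then invoke the exactness property of the four-point method on the recovered distances.

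First I would verify that every pairwise distance in the quartet lies in a bounded range to which Theorem \ref{dist-concentration} applies. Since each edge length is in $(f,g)$, any two leaves are separated by between $2$ and $3$ edges in the quartet, so the pairwise tree distances all lie in the interval $[2f, 3g]$. Setting $f' := 2f$ and $g' := 3g$ in Theorem \ref{dist-concentration}, there is a constant $c(2f,3g,\Delta)$ such that for any single pair of leaves $u,v$,
$$
\Pr\!\left[\,|d(u,v)-\hat d(u,v)|>\tfrac{\Delta}{2}\,\right] \;\leq\; \exp(-k/c(2f,3g,\Delta)).
$$
Choosing $c_2(f,g,\Delta) := 3\,c(2f,3g,\Delta)$ and $k = c_2(f,g,\Delta)\log n$ makes each individual failure probability at most $n^{-3}$.

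Next I would apply a union bound over the six pairs in the quartet, obtaining probability at least $1-6/n^3$ that every one of the six distance estimates lies within $\Delta/2$ of the true value. Because the true distances are multiples of $\Delta$, rounding each $\hat d$ to the nearest multiple of $\Delta$ on this good event recovers the six pairwise tree distances \emph{exactly}.

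With the six true pairwise distances in hand, the four-point method identifies the correct topology exactly: as noted in the preceding subsection, the two incorrect sums exceed the correct one by $2\ell(e)\geq 2f>0$, where $e$ is the internal edge, so there is no tie. Given the correct topology, each edge length is a fixed linear combination of the pairwise distances (for a leaf edge $a$ adjacent to internal node $u$, $\ell(a)=\tfrac12[d(a,b)+d(a,c)-d(b,c)]$ for any sibling $b$ and any leaf $c$ on the opposite side of the middle edge, and the middle edge is similarly recovered), so all five edge lengths are reconstructed exactly. The only real obstacle is the choice of $f'$ and $g'$: we must be sure the range $[2f,3g]$ still lies in the regime where Theorem \ref{dist-concentration} applies (this is automatic since that theorem requires only $\Delta\leq f'<g'<\infty$), after which everything follows by the union bound and the exactness of the four-point formulas.
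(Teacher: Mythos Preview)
Your proposal is correct and matches the paper's approach exactly: the paper's proof is the one-line remark that the claim follows from Theorem~\ref{dist-concentration} by setting $f'=2f$ and $g'=3g$, which is precisely your reduction, and the $6/n^3$ bound is the union bound over the six pairs that you spell out. You have simply filled in the details the paper leaves implicit.
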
 

\begin{proof}
The claim follows from Corollary 1, by setting $f'=2f$ and $g'=3g$.
\end{proof}

\subsection{Ancestral states}

When all edge lengths are below the phase transition threshold, we can correctly infer the ancestral state of a character at any internal node of the tree with probability greater than $\frac{1}{2}+\beta$ for some constant $\beta$~\cite{Kenyon,MosselPhase,Roch}. Using this observation, we identify nodes which should be near-neighbours in the tree, join them together, and infer new ancestral node sequences, until we have only a single tree remaining.  The following theorem, which is an adaptation of a result by Daskalakis {\it et al.}~\cite{Roch}, describes the probability of correctly reconstructing ancestral states.

\begin{theorem}
\label{anc-reconstruction}
Let $T$ be a binary tree with root $\rho$ and edges $e$ all satisfying $p(e)<\frac{1}{2}-\sqrt{\frac{1}{8}}$.  Let $\delta T$ denote the leaves of $T$. Let $\sigma$ be a Cavender-Farris character on $T$.  The maximum likelihood algorithm $A$ for ancestral state reconstruction computes the ancestral state with probability satisfying
$$
\Pr[A(\sigma_{\delta T})=\sigma_\rho|\sigma_\rho=1]=\Pr[A(\sigma_{\delta T})=\sigma_\rho|\sigma_\rho=-1]>\frac{1}{2}+\beta.
$$for some constant $c$ independent of $T$.
\end{theorem}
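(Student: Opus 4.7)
The statement is the Kesten-Stigum reconstruction bound for the $\pm 1$ broadcast process in the regime $p < 1/2 - \sqrt{1/8}$, where on a binary tree the condition $2(1-2p)^2 > 1$ guarantees nonvanishing root-leaf correlation. Since the excerpt attributes the theorem to Daskalakis, Mossel, and Roch, my plan is to port their argument and specialize the constant to our bounded-edge setting.

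First I would invoke the global sign-flip symmetry of the Cavender-Farris model: the joint distribution of $(\sigma_\rho, \sigma_{\delta T})$ is invariant under flipping every character, so the two conditional probabilities in the statement are automatically equal, and it suffices to bound the unconditional accuracy $\Pr[A(\sigma_{\delta T}) = \sigma_\rho]$. Since $A$ is the MAP rule, $A = \mathrm{sign}(M_\rho)$ where $M_v := E[\sigma_v \mid \sigma_{\delta T_v}]$ is the posterior mean at $v$, and a short calculation gives $\Pr[A = \sigma_\rho] = \frac{1}{2}(1 + E[|M_\rho|]) \geq \frac{1}{2}(1 + E[M_\rho^2])$ (using $|M_\rho| \leq 1$). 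Hence it is enough to produce a constant $\beta_0 > 0$, depending only on $f$ and $g$, with $E[M_\rho^2] \geq \beta_0$ for every admissible tree.

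The heart of the proof is an induction on depth showing $E[M_v^2] \geq \beta_0$ uniformly. At a leaf, $M_v = \sigma_v$ and $E[M_v^2] = 1$. At an internal $v$ with children $u_1, u_2$ connected by edges of parameters $\theta_i = 1 - 2p(e_i)$, the Markov property and Bayes' rule give an explicit recursion expressing $M_v$ in terms of $M_{u_1}, M_{u_2}, \theta_1, \theta_2$, whose linearization near $0$ is governed by $\sum_i \theta_i^2$. Our hypothesis $p(e) < 1/2 - \sqrt{1/8}$ yields $\theta_i^2 \geq (1 - 2g)^2 > 1/2$, so $\sum_i \theta_i^2 > 1$: this is precisely the Kesten-Stigum condition forcing the induced recursion on $E[M_v^2]$ to admit a stable fixed point bounded above $0$, uniformly over the topology.

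The main obstacle is the nonlinear lower-order terms in the exact recursion: linearization alone only pins down the threshold and gives an asymptotic statement, whereas the theorem asks for an explicit positive $\beta$. Resolving this requires showing the exact recursion is a strict contraction toward a positive fixed point on an interval $[\beta_0, 1]$. I would import this contraction step from Daskalakis-Mossel-Roch and verify that our two-sided edge bounds $f \leq \ell(e) \leq g$ place us strictly inside its hypotheses, which then yields the universal constant $\beta = \beta_0 / 2$ claimed by the theorem.
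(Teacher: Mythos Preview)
The paper does not actually prove this theorem: it is stated as a citation to Daskalakis, Mossel, and Roch, and the paper explicitly remarks that their proof proceeds via the \emph{recursive majority} algorithm, not via the magnetization recursion you describe. Recursive majority ignores the posterior entirely: one shows by induction on depth that the sign of a recursive majority vote over the leaves tracks the root state with bias bounded away from zero whenever $2(1-2p)^2>1$, using a direct second-moment computation on the bias of the majority of biased $\pm1$ inputs. The ML estimator then inherits the same lower bound by optimality.

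Your route via $E[M_v^2]$ is a legitimate alternative, closer in spirit to the Bleher--Ruiz--Zagrebnov and Borgs--Chayes--Mossel--Roch analyses of the Ising model on trees, and in principle it buys a sharper constant since ML dominates every other reconstruction rule, including recursive majority. The symmetry step and the inequality $\Pr[A=\sigma_\rho]=\tfrac12(1+E[|M_\rho|])\ge\tfrac12(1+E[M_\rho^2])$ are both correct. The one place where your plan slips is the attribution: you say you will ``import the contraction step from Daskalakis--Mossel--Roch,'' but their argument does not contain that step---it uses recursive majority precisely because the exact magnetization recursion on irregular trees with varying edge parameters is awkward to control. If you want to carry your approach through you will need to source the contraction analysis elsewhere or establish it directly. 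Also note that the lower edge bound $f$ plays no role in this statement; only the upper bound $g$ (equivalently $p(e)<\tfrac12-\sqrt{1/8}$) is used.
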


The maximum likelihood algorithm~\cite{Felsenstein} computes the posterior distribution of a state at an internal node based on the previously computed posterior distributions at its children. For constant-sized alphabets, this can be done in $O(1)$ time. We then pick the state of largest posterior probability. If the true edge lengths are known, this algorithm has the optimal probability of correctly reconstructing the ancestral state, among all possible algorithms.

The proof of~\cite{Roch} used the {\it recursive majority algorithm}, making the parameter $\beta$ hard to estimate. We need an upper bound on $\beta$ to bound the running time of our algorithm. The following result was proved by Evans {\it et al.}~\cite{Kenyon}. 

\begin{theorem}
\label{evans}
Let $T$ be a phylogenetic tree where all mutation probabilities across edges are bounded by $p_q<\frac{1}{2}-\sqrt{\frac{1}{8}}$. Assign to each edge $e$ a resistance $(1-2p)^{-2|e|}$, where $|e|$ is the number of edges on the path from root to $e$, including $e$ itself. The probability $p_{err}$ of incorrectly reconstructing the state at the root of the tree is bounded by

$$
p_{err}<\frac{1}{2}-\frac{1}{1+\mathcal{R}_{\mathit{eff}}}
$$where $\mathcal{R}_{\mathit{eff}}$ is the effective resistance between the root and the leaves of $T$. 
\end{theorem}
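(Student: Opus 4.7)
My plan is to follow the electrical-network / second-moment framework of Evans, Kenyon, Peres, and Schulman: build an explicit linear estimator $X$ of $\sigma_\rho$ as a weighted combination of the leaf states, control the relationship between its conditional bias and its conditional second moment through the prescribed effective resistance $\mathcal{R}_{\mathit{eff}}$, and finish with a second-moment inequality that converts this control into a bound on the Bayes error probability. Since $\mathrm{sgn}(X)$ is one particular reconstructor, the ML algorithm $A$ can only improve on it, so it suffices to analyze this estimator.

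First, I would define $X = \sum_{\ell \in \delta T} w_\ell \sigma_\ell$, with $w_\ell$ being the current arriving at leaf $\ell$ when one imposes unit voltage at $\rho$, grounds every leaf, and installs the theorem's edge resistances $r(e) = (1-2p(e))^{-2|e|}$. The Markov property of the Cavender--Farris process gives $E[\sigma_\ell \mid \sigma_\rho = s] = s\prod_{e \in \mathrm{path}(\rho,\ell)}(1-2p(e))$, so $E[X \mid \sigma_\rho = s]$ reduces to a weighted sum of root-to-leaf path products. The depth-dependent exponent $2|e|$ in the edge resistance is chosen precisely so that, when combined with the electrical definition of $w_\ell$, this weighted sum collapses into a closed-form expression proportional to $s/\mathcal{R}_{\mathit{eff}}$ via Ohm's law.

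Next, I would compute the conditional second moment. Since $\sigma_u \sigma_v$ depends only on the edge noises along the path between $u$ and $v$ (the $\sigma_\rho$-factors cancel), $E[\sigma_u \sigma_v \mid \sigma_\rho] = \prod_{e \in \mathrm{path}(u,v)} (1-2p(e))$, where that path passes through the lowest common ancestor of $u$ and $v$. Regrouping the double sum $\sum_{u,v} w_u w_v\, E[\sigma_u \sigma_v \mid \sigma_\rho]$ by that ancestor and telescoping up the tree via the series and parallel laws applied to the depth-weighted network reduces $E[X^2 \mid \sigma_\rho]$ to a function of $\mathcal{R}_{\mathit{eff}}$ alone. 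Combining the bias and second-moment estimates through a Cauchy--Schwarz-type rounding inequality then yields $\Pr[\mathrm{sgn}(X) \neq \sigma_\rho] \leq \frac{1}{2} - \frac{1}{1+\mathcal{R}_{\mathit{eff}}}$, which in turn bounds the ML reconstruction error.

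The main obstacle is the inductive second-moment calculation: the contributions of all leaf pairs, organized by their meeting vertex and collapsed through series-parallel reduction, must aggregate to precisely the depth-weighted effective resistance. Verifying that the factors $(1-2p(e))^{2|e|}$ embedded in the edge resistances exactly absorb the accumulated correlation-decay factors $(1-2p(e))$ at each level is the delicate combinatorial step. Once that identity is established, both the bias computation and the concluding probability bound follow from standard second-moment manipulations.
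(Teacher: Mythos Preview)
The paper does not actually prove this theorem; it is stated as a quotation of a result of Evans, Kenyon, Peres, and Schulman and cited as such, with no argument given in the paper beyond the attribution. So there is no ``paper's own proof'' to compare against.

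Your proposal is, in outline, exactly the Evans--Kenyon--Peres--Schulman argument: a linear estimator with current-flow weights, bias computed via the multiplicative Markov correlations $(1-2p(e))$ along root-to-leaf paths, the second moment organized by lowest common ancestors and collapsed through series--parallel reduction to the effective resistance, and a second-moment (Paley--Zygmund / Cauchy--Schwarz) inequality to pass from moments to an error bound, with ML optimality giving the final inequality for $A$. That is the right framework, and the ``main obstacle'' you flag---matching the depth-weighted resistances to the accumulated correlation factors in the pairwise sum---is indeed the crux of the original proof. One caution: be careful that the weights you use (unit-voltage currents versus a unit flow minimizing energy) and the exact second-moment inequality you invoke actually combine to yield the specific constant $\tfrac{1}{1+\mathcal{R}_{\mathit{eff}}}$ rather than a qualitatively similar but numerically different bound; in the original argument this constant comes out of a precise identity between the estimator's squared bias and its second moment, not a generic inequality, so the ``Cauchy--Schwarz-type rounding'' step deserves to be made explicit.
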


This bound is quite loose. For this reason, we will use a better bound, originally developed for the simpler Fitch parsimony algorithm. The following bound is sharper for $p_q<0.118$.

\begin{theorem}
\label{fitch-rec}
Let $T$ be a phylogenetic tree where all mutation probabilities across edges are equal to $p_q<\frac{1}{8}$. The probability $p_{err}$ of incorrectly reconstructing the state at the root of the tree using  Fitch parsimony is bounded by

$$
p_{err}<\frac{1}{2}-\frac{\sqrt{(1-4p_g)(1-8p_g)}}{2(1-2p_g)^2}<1-4p_g
$$
\end{theorem}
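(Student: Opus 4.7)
The plan is to analyze Fitch parsimony by tracking at each node $v$ the probability distribution of the Fitch reconstruction set, conditional on the true character state at $v$ being $+1$. Let $a_v, b_v, c_v$ denote the probabilities that the Fitch set equals $\{+1\}, \{-1\}, \{+1,-1\}$, respectively. Breaking ties at the root by a fair coin, the error probability is $p_{err} = b_v + c_v/2 = (1 - u_v)/2$, where $u_v := a_v - b_v$, so the target bound $p_{err} < 1/2 - u^*/2$ is equivalent to $u_v > u^*$, for $u^* = \sqrt{(1-4p)(1-8p)}/(1-2p)^2$.

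First I would derive a recursion for $(a_v, b_v, c_v)$. For a leaf, $(a_v, b_v, c_v) = (1, 0, 0)$. For an internal node with children $u_1, u_2$, enumerating the nine pairs of child-sets and applying Fitch's rule (intersect if the intersection is non-empty, else union) gives
\[ a_v = A_1 A_2 + A_1 C_2 + A_2 C_1, \quad b_v = B_1 B_2 + B_1 C_2 + B_2 C_1, \quad c_v = A_1 B_2 + A_2 B_1 + C_1 C_2, \]
where $A_i = (1-p)a_{u_i} + p\, b_{u_i}$ and $B_i = (1-p)b_{u_i} + p\, a_{u_i}$ account for the mutation along the edge into $u_i$, and $C_i = c_{u_i}$. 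A short algebraic simplification collapses the $u$-recursion to
\[ u_v = (1-2p)\Bigl[u_{u_1}\cdot\tfrac{1+c_{u_2}}{2} + u_{u_2}\cdot\tfrac{1+c_{u_1}}{2}\Bigr], \]
with a companion recursion for $c_v$.

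Next I would locate the symmetric fixed point. Setting $u_{u_1} = u_{u_2} = u^*$ and $c_{u_1} = c_{u_2} = c^*$ in the $u$-recursion forces $c^* = 2p/(1-2p)$; substituting into the recursion for $a_v + b_v = 1 - c_v$ and solving the resulting quadratic yields $u^* = \sqrt{(1-4p)(1-8p)}/(1-2p)^2$, which is real and positive exactly when $p < 1/8$. This computation is the source of the $1/8$ condition and, via $p_{err} = (1-u^*)/2$, produces the first displayed upper bound. The second inequality $1/2 - \sqrt{(1-4p)(1-8p)}/(2(1-2p)^2) < 1 - 4p$ reduces, after rearranging, to $4p - 1/2 <$ a non-negative quantity, which is trivial since $4p - 1/2 < 0$ for $p < 1/8$.

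Finally I would establish $u_v > u^*$ for every node $v$ of an arbitrary finite tree by induction on tree depth. The base case is immediate since $u_v = 1 > u^*$ at a leaf. The main obstacle is the inductive step for unbalanced trees: the two-variable recursion is not obviously monotone in a single coordinate, so I would argue by exhibiting an invariant region, verifying that the map sending $((u_{u_1}, c_{u_1}), (u_{u_2}, c_{u_2}))$ to $(u_v, c_v)$ preserves $\{u > u^*,\ c < c^*\}$ whenever $p < 1/8$. Propagating this invariant from the leaves up to the root gives the strict inequality and completes the proof.
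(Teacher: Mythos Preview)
The paper does not prove this theorem; it is quoted as a known result, with the remark that Zhang \emph{et al.}\ later corrected the variable-edge-length version. So there is no in-paper proof to compare against, and your proposal has to stand on its own.

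Your setup is the standard one and the computations are correct: the Fitch recursion for $(a_v,b_v,c_v)$, the collapsed recursion
\[
u_v=(1-2p)\Bigl[u_{1}\,\tfrac{1+c_{2}}{2}+u_{2}\,\tfrac{1+c_{1}}{2}\Bigr],
\qquad
c_v=\tfrac12\bigl[(1-c_1)(1-c_2)-(1-2p)^2u_1u_2\bigr]+c_1c_2,
\]
and the fixed point $c^*=\tfrac{2p}{1-2p}$, $u^*=\sqrt{(1-4p)(1-8p)}/(1-2p)^2$ are all right, as is the trivial second inequality.

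The gap is in the inductive step. The rectangle $\{u>u^*,\ c<c^*\}$ is \emph{not} invariant under the two-child map. The formula for $u_v$ is increasing in $c_1,c_2$, so pushing $c_i$ down hurts $u_v$: with $u_1=u_2=u^*$ and $c_1=c_2=0$ (points in the closure of your region) you get $u_v=(1-2p)u^*<u^*$. Hence you cannot ``verify'' the invariance you propose---it is false as stated. What saves the actual dynamics is that $(u,c)=(u^*,0)$ is not reachable: the only state with $c=0$ is the leaf $(1,0)$, and along genuine trajectories $u$ and $c$ are coupled. A correct argument therefore needs a sharper invariant (a curve, not a box) or a different structure. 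For complete balanced trees one shows directly that the one-dimensional iterates satisfy $u_n\downarrow u^*$ and $c_n\uparrow c^*$; for arbitrary topologies with constant $p$ one typically argues via a Lyapunov-type quantity relating $u$ and $c$, or reduces to the balanced case by a comparison/monotonicity argument at the level of conditional distributions rather than coordinatewise on $(u,c)$. Either way, the specific invariant you name does not do the job, and identifying the right one is the actual content of the proof.
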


Note that the original result was stated to hold for variable edge lengths. This was corrected by Zhang {\it et al.}~\cite{Zhang}.

The above bound applies to the maximum likelihood algorithm, even for variable edge lengths, as long as they are bounded by $p_q$. For constant length edges, the result holds by the optimality of maximum likelihood. Now suppose that we shrink an edge in tree $T$, creating a tree $T'$. The mutual information between the leaves and the root of $T'$ is greater than the mutual information between the leaves and the root of $T$, meaning that the probability of reconstructing the root of $T'$ must be higher. Applying this argument to all edges except the longest, we obtain that the above bound holds for trees of variable edge lengths when the maximum likelihood algorithm is used.

Knowing the bound on $p_{err}$ is important in our algorithm because we will use locality-sensitive hashing on these sequences, and we need to know the number of hash tables required. Better bounds on $p_{err}$ will result in faster algorithms with the same performance characteristics.  Let $g_{err}$ be the distance corresponding to a mutation probability of $p_{err}$.

\begin{figure}[t]
\centering
\includegraphics[width=5in]{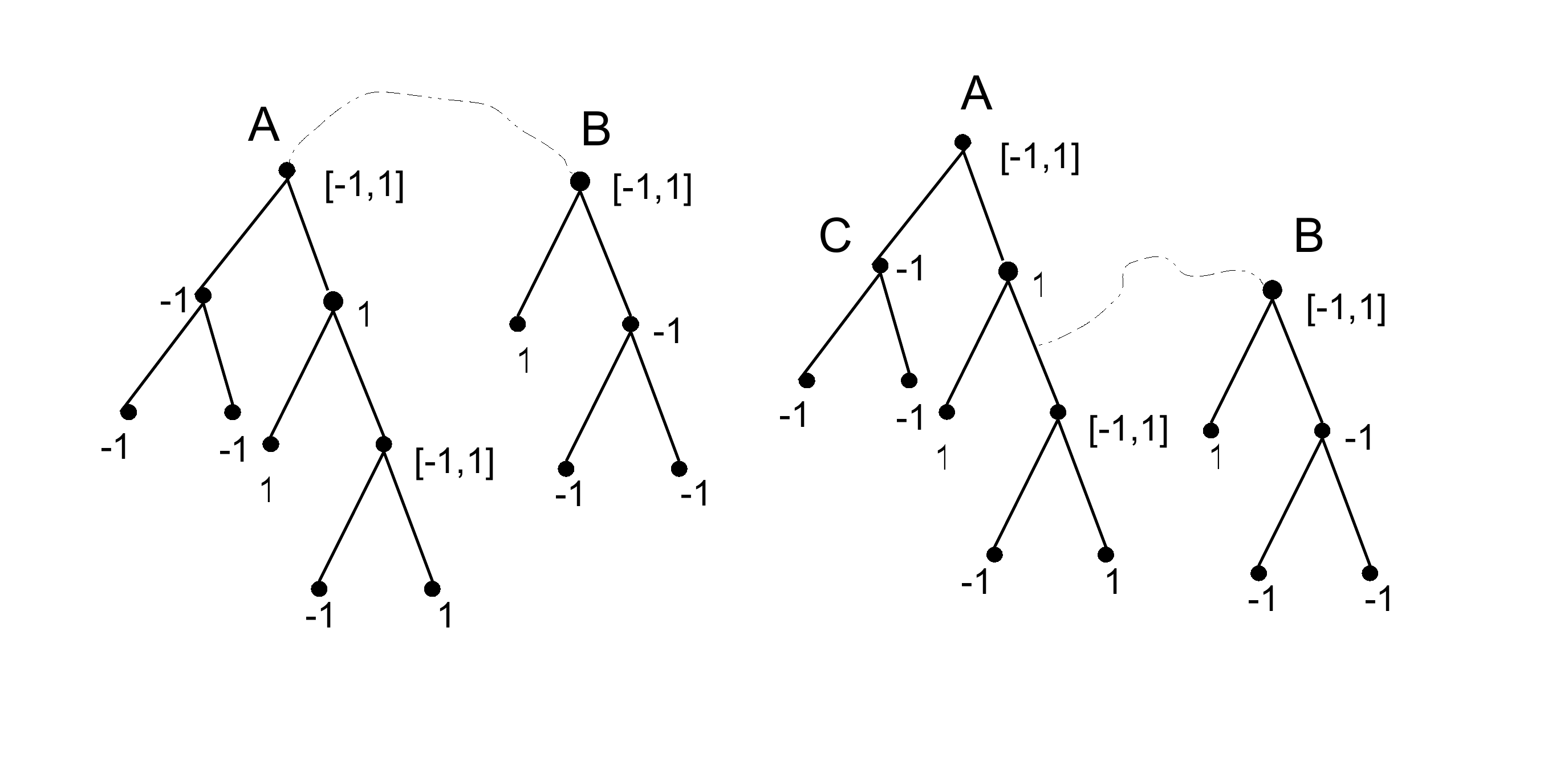}
\caption{Independence relationships between sequences reconstructed using Fitch's algorithm: in the left tree, sequences $A$ and $B$ are error-independent. In the right tree, $A$ and $B$ are not error-independent, but $C$ and $B$ are error-independent. The algorithm was run on subtrees drawn in solid lines.
\label{independent-reconstruction}}
\end{figure}

Suppose that we reconstruct ancestral sequences for subtrees $T_1,T_2$ of $T$, rooted at $\rho_1,\rho_2$, respectively. Moreover, suppose that the path connecting $T_1$ and $T_2$ has ends $\rho_1,\rho_2$ (see Figure~\ref{independent-reconstruction}). By the Markov property, reconstructing $\sigma_{\rho_1}$ correctly is independent of reconstructing $\sigma_{\rho_2}$ correctly. We call such two sequences {\it error-independent}.  The distance estimate $\hat{d}(\sigma_{\rho_1},\sigma_{\rho_2})$ will converge to $g+g_1+g_2$, where $g_1$ and $g_2$ are edge lengths corresponding to the probabilities of incorrectly reconstructing states in the two sequences. When comparing independently reconstructed sequences, we can effectively treat these errors in the reconstructed sequences as ``extra edges''~\cite{Roch}, whose length can be bounded using Theorem 4. This observation has been used extensively in many theoretical algorithms~\cite{Roch,MosselPhase,Mihaescu}. 

Theorem~\ref{anc-reconstruction} combined with Theorem~\ref{thm-four-point} and the Markov property enable us to estimate internal branch lengths from reconstructed ancestral sequences, and also correct quartets.

\begin{theorem}
\label{anc-quartet}
Let $i,j,k,l$ be ancestral sequences reconstructed by maximum likelihood from four disjoint subtrees, and such that no path between any two of them in the true phylogeny shares an edge with any of the subtrees. Let $f'$ and $g'$ be the upper and lower bounds on the edge lengths in the quartet $ijkl$. Then there exists a constant $c(f',g',\Delta)$ such that given reconstructed ancestral sequences of length $c\log n$, we can estimate the length of the middle edge of $ij|kl$ exactly, with probability  $1-\frac{6}{n^3}$. 
\end{theorem}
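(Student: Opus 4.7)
The plan is to reduce the claim to a direct application of Theorem~\ref{thm-four-point} (the four-point method) by passing to an \emph{augmented quartet} in which each of the four reconstructed ancestral sequences is viewed as a leaf attached to its reconstruction root $i,j,k,l$ through an extra ``error edge.''

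First, I would argue that the four reconstructions are pairwise error-independent. Because the four subtrees are disjoint and no path between two roots in the true phylogeny shares an edge with any of the subtrees, conditioning on the true character at one of the roots decouples the leaves used to reconstruct that root from the leaves used for the other three reconstructions. Applied site by site, the Markov property then makes the reconstruction-error flips at $i,j,k,l$ independent given the true characters, so the reconstructed sequences behave exactly as if produced by the Cavender-Farris process on an augmented quartet: the interior of the quartet is the original subtree connecting $i,j,k,l$, and the four pendant ``error edges'' have lengths $g_i,g_j,g_k,g_l$ corresponding to the per-root reconstruction error probabilities.

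Second, I would bound each $g_x$ by a constant $g_{err}$ independent of $n$ using Theorem~\ref{fitch-rec} (or, where it does not apply, the weaker Theorem~\ref{evans}). Crucially, because the error edges attach only at the leaves of the augmented quartet, the middle edge of $ij|kl$ is unchanged, and every pairwise distance in the augmented quartet lies between two constants $f'',g''$ depending only on $f',g',g_{err},\Delta$. I would then apply Theorem~\ref{dist-concentration} to each of the six augmented pairwise distances with sufficiently small tolerance (using $\Delta/4$ in place of $\Delta/2$, by running the concentration bound at a finer grid so that four-point differences remain below $\Delta/2$), and take a union bound to obtain that all six estimates are simultaneously within the required tolerance with probability at least $1-6/n^3$, provided the sequence length is at least $c(f',g',\Delta)\log n$ for an appropriate constant absorbing $g_{err}$. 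Under this event, the $g_x$ cancel in the four-point formula for the middle edge, so the computed value is within $\Delta/2$ of the true middle edge length, and rounding to the nearest multiple of $\Delta$ recovers it exactly.

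The main obstacle I anticipate is making the error-independence step rigorous. Maximum likelihood reconstruction at each root is a nonlinear function of many leaf characters, so independence of the errors across roots does not follow automatically from disjointness of the subtrees; it must be derived from the conditional-independence structure supplied by the Markov property along the separating paths, using precisely the hypothesis that these paths avoid the four subtrees. Once this is established, the remaining steps are routine combinations of the previously stated concentration and four-point results together with a union bound.
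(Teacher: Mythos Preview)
Your proposal is correct and follows essentially the same approach the paper intends: the paper does not give a separate proof of this theorem but simply states, in the sentence immediately preceding it, that it follows by combining Theorem~\ref{anc-reconstruction}, Theorem~\ref{thm-four-point}, and the Markov property, together with the earlier remark that reconstruction errors at error-independent roots can be treated as ``extra edges.'' Your plan is a faithful and more careful unpacking of exactly that combination; in particular, your observation that the error edges $g_i,g_j,g_k,g_l$ cancel in the four-point formula for the middle edge (so that only the middle edge, not the augmented pairwise distances, needs to be a multiple of $\Delta$) is the right way to close the argument, and tightening the per-distance tolerance so the combined error stays below $\Delta/2$ is the appropriate bookkeeping.
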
 

\section{The algorithm}

The algorithm starts with a forest $F$ of $n$ trees, each with one taxon. It then progressively merges subtrees into larger subtrees of the true tree, by finding two nodes that are quite close using locality-sensitive hashing, identifying where they should be joined, and inferring ancestral sequences.  The underlying idea, however, is complicated by the requirement in Theorem~\ref{anc-quartet} that the sequences be reconstructed from disjoint subtrees of the true phylogeny.

Algorithm \ref{simple} presents a simplification of the algorithm.  Our algorithm maintains four invariants:

\begin{enumerate}
\item Every tree in $F$ is a subtree of the true tree $T$
\item No two trees in $F$ overlap as subtrees of $T$
\item For each tree $T'$ in $F$, all edges except at most one have length at most $g$. The remaining edge has length at most $2g$. We call it a {\it long edge}.
\item The length of every path in every subtree in $F$ is reconstructed correctly
\end{enumerate}

The first three invariants are the same as in the work of Mihaescu {\it et al.}~\cite{Mihaescu}; the fourth, we maintain using Theorem \ref{anc-quartet}.
The invariants, together with routine $CheckErrorIndependence$, ensure that its preconditions are satisfied. 

\begin{algorithm}
\caption{SimplifiedReconstruct($\{\sigma\},f,g$)}
\label{simple}
\begin{algorithmic}
\STATE Start with a forest with each node in its own tree.
\STATE Use locality-sensitive hashing to find all sequences whose pairwise distance is less than $3g+2g_{err}$. Put them in a priority queue, \emph{DistQueue}.
\WHILE{the forest has more than one tree}
\STATE Find two sufficiently close nodes $x$ and $y$ that are not currently in the same tree of $F$.
\STATE Identify the nearby edge $(i,j)$ to $x$ and $(k,l)$ to $y$ that should be joined to connect the trees containing $x$ and $y$ in the forest
\STATE Create two new nodes, $a$ and $b$ in the middle of $(i,j)$ and $(k,l)$, and join $a$ and $b$ together with a new edge.  
\STATE Estimate the lengths of all five edges in the quartet $ijkl$.
\STATE Reconstruct the ancestral sequences at $a$ and at $b$.
\STATE Find all sequences within $3g+2g_{err}$ of the newly inferred sequences, and add these distances to \emph{DistQueue}.
\ENDWHILE
\end{algorithmic}
\end{algorithm}

In what follows, we will expand the details of this algorithm, focusing on ensuring invariants 3 and 4.  We will assume the existence of three procedures: $FindAllClose(q,d)$ uses locality-sensitive hashing to identify all sequences within relative Hamming distance $d$ of $q$.  $Quartet(a,b,c,d)$ uses the four-point method to identify the correct topology of the quartet $abcd$.  And $MiddleEdge(ab|cd)$ computes the length of the middle edge in the quartet $ab|cd$.  Assuming the preconditions to Theorem \ref{anc-reconstruction}, these procedures work with high probability.

Most of the subroutines are presented for the case where all their arguments are internal nodes. The cases where some nodes are leaves are analogous; we omit them for brevity.

We will often treat subtrees with long edges as rooted, with the root located somewhere on the long edge. 

\subsection{Independent inferences}

If the reconstructed sequences used to perform quartet queries are not independent, the quartet middle edge length estimates and the inferred quartet topology might be incorrect. This could lead to a wrong choice of which edges to join.


The order in which ancestral sequences are reconstructed defines a partial order of the nodes in $F$. We call it the {\it reconstruction order}. For two nodes $a$ and $x$ in $F$, at least one of the children of $b$ with respect to reconstruction order is not on the path from $a$ to $x$ in $T$. $CheckErrorIndependence$ uses this observation to detect cases when lack of error-independence impacts the middle edge length estimate for the quartet. 

\begin{algorithm}
\caption{CheckErrorIndependence$(x,y,a,b)$}
\label{a1}
\begin{algorithmic}
\REQUIRE $a$ and $b$ are error-independent, $x$ and $y$ are error-independent, $T(a)$ and $T(x)$ do not overlap
\FORALL{$z \in \left\{a,b,x,y \right\}$ }
	\STATE Let $z_1,z_2$ be the children of $z$ in reconstruction order(if they exist)
\ENDFOR
\STATE $d \gets MiddleEdge(xy|ab)$
\FOR{$i,j \in \left\{1,2\right\}$}
	\STATE $d_{a_ib_j} \gets MiddleEdge(xy|a_ib_j)$
	\STATE $d_{x_iy_j} \gets MiddleEdge(x_iy_j|ab)$
\ENDFOR
\IF{any of the $d_{a_ib_j}$ or $d_{x_iy_j}$ differs from $d$ by more than $\Delta/2$}
	\RETURN false
\ENDIF
\RETURN true
\end{algorithmic}
\end{algorithm}

Let $ME(xy|ab)$ denote the true length of the middle edge in the quartet $xy|ab$ in $T$.

\begin{lemma}
If $|MiddleEdge(xy|ab)-ME(xy|ab)|>\Delta/2$, then CheckErrorIndependence returns false. Otherwise, it returns true with high probability.
\end{lemma}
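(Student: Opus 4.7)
The plan rests on a geometric identity for the true middle edge, combined with the reconstruction-order observation that precedes Algorithm~\ref{a1}. The identity is that $ME(xy|a_ib_j) = ME(xy|ab)$ for any reconstruction-order children $a_i$ of $a$ and $b_j$ of $b$, and analogously for children of $x, y$. This is a direct four-point calculation: $a$ and $b$ were created as new ancestral nodes by splitting edges during a merge step, so their reconstruction-order children are precisely the neighbors of $a$ and $b$ in $T$ that do not lie on the $a$-$b$ segment. The substitution then adds $d(a, a_i)$ symmetrically to both distances involving $a_i$, and these additions cancel inside the four-point formula. I would verify this identity first.

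For the main (forward) implication I would argue the contrapositive: suppose the procedure returns true. By the reconstruction-order observation, for any $u \in \{a, b, x, y\}$ and any node $v$ in another region of $T$, at least one reconstruction-order child of $u$ lies off the $u$-to-$v$ path, so its reconstruction subtree is disjoint from $T(v)$. Applying this selectively to each pair not already covered by the preconditions $a \perp b$, $x \perp y$, $T(a) \cap T(x) = \emptyset$, one can exhibit a combination of child substitutions for which the resulting quartet is pairwise error-independent. By Theorem~\ref{anc-quartet} together with the geometric identity, that substituted query returns exactly $ME(xy|ab)$ with probability $1 - O(n^{-3})$. Since the procedure returned true, its value is within $\Delta/2$ of $d$, so $|d - ME(xy|ab)| \le \Delta/2$, completing the contrapositive.

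For the converse direction, suppose $|d - ME(xy|ab)| \le \Delta/2$. By the geometric identity, every substituted quartet has true middle edge $ME(xy|ab)$. A union bound over the constantly many pairwise distance estimates involved in the eight substituted queries, combined with Theorem~\ref{dist-concentration}, shows that each estimate rounds to the exact true distance with probability $1 - O(n^{-3})$. Consequently every $d_{a_ib_j}$ and $d_{x_iy_j}$ equals $ME(xy|ab)$ and therefore lies within $\Delta/2$ of $d$, and the procedure returns true with high probability.

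The main obstacle is the forward direction: one must argue that a single well-chosen pair of indices $i, j$ simultaneously removes every residual error-dependency among $\{a, b, x, y\}$ not already covered by the preconditions. The cleanest route is to observe that, within a single forest subtree, the paths from any internal node to any two nodes outside coincide on their initial segment, so a single ``safe'' child choice decouples its node from all external dependencies at once; this is what lets us collapse what looks like two separate safe-child requirements (against $x$ and against $y$) into one.
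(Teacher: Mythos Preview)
Your approach matches the paper's: both hinge on the identity $ME(xy|a_ib_j)=ME(xy|ab)$ together with the observation that at least one reconstruction-order child of any offending node lies off the path causing the dependency, so some computed substituted query is error-independent and therefore exposes any bias in $d$. Your write-up is more structured (explicit contrapositive, separate converse argument, explicit attention to why one child choice handles both external nodes), while the paper's proof is a brief case sketch that treats the converse only via the remark that when $x,y,a,b$ are all independent every estimate is already correct.
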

\begin{proof}
Let $T(a)$ and $T(x)$ be the subtrees that contain $a$ and $x$, respectively. Sequences $a$ and $b$ are independent and lie on the opposite sides of edge $e$. It follows that if $x$ and $y$ join the tree at $e$, we have $ME(xy|ab)=ME(xy|a_ib_j)$ for any choice of $i$ and $j$. If $x,y,a,b$ are independent, then all the middle edge estimates are within $\Delta/2$ of each other and correct within $\Delta/2$. Now suppose some two of these sequences are not independent (say $a$ and $x$). Without loss of generality, that means that $T(x)$ joins $T(a)$ at some edge in subtree of $T(a)$ consisting of all nodes based on which the sequence at $a$ was reconstructed. It is easy to see that one of the sequences $a_1,a_2$ is then independent of $x$, in which case its corresponding call to $MiddleEdge$ will return a value that is correct within $\Delta/2$.
\end{proof}

\subsection{Detecting overlapping subtrees}

To maintain Invariant 2, we need to prevent merging two trees if the new edge created between them overlaps some other subtree in $F$. The procedure $CheckOverlaps$ detects overlapping subtrees. We use $R(T(x))$ to indicate the set of sequences in $T(x)$ and $d_{T'}$ for the path metric associated with $T'$.

\begin{algorithm}
\caption{CheckOverlaps(x,y)}
\label{a1}
\begin{algorithmic}
\STATE $S=(FindAllClose(x,2g+2g_{err}) \cup FindAllClose(y,2g+2g_{err})) - R(T(x))$ 
\FOR{each sequence $a$ in $S$}
	\FOR{each sequence $b$ in $T(a)$ that is independent of $a$ and such that $d_{T(a)}(a,b)<5g$}
		\IF{$Quartet(a,b,x,y) \neq ab|xy$ and $CheckIndependence(Quartet(a,b,x,y))=true$}
					\RETURN true
		\ENDIF
	\ENDFOR
\ENDFOR
\RETURN false
\end{algorithmic}
\end{algorithm}

\begin{lemma}
If the edge $(x,y)$ overlaps some other edge in $F$ and the sequences at $x$ and $y$ are independent, $CheckOverlaps$ will return true. Otherwise, it will return false.
\end{lemma}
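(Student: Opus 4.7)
The plan is to verify the two cases of the lemma by analyzing how the path from $x$ to $y$ in the true tree $T$ meets each subtree $T' \in F$ containing any candidate $a \in S$, and then invoking the high-probability correctness of $FindAllClose$, $Quartet$, and $CheckErrorIndependence$.

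For the first case (overlap implies return true), I would start from the hypothesis that some edge of a subtree $T' \neq T(x)$ lies on the $x\to y$ path in $T$, and that the sequences at $x,y$ are error-independent. Let $u,v$ be the entry and exit vertices of $T'$ along that path, with $u$ closer to $x$; both are vertices of $T'$, and $u \neq v$ since there is at least one shared edge, so in particular $d(u,v) \geq f$. Because $(x,y)$ was inserted into \emph{DistQueue} with $d(x,y) < 3g + 2g_{err}$, the inequality $d(x,u) + d(u,v) + d(v,y) = d(x,y)$ together with $d(u,v) \geq f$ gives $\min(d(x,u), d(y,v)) < (3g + 2g_{err} - f)/2 < 2g + 2g_{err}$, so at least one of $u, v$ is returned by the \emph{FindAllClose} calls and lies in $S$. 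Take that vertex as $a$ (say $a = u$) and try $b = v$: the four points appear on the path in the order $x, u, v, y$, so the true quartet topology on $\{a, b, x, y\}$ is $xu|vy$, which differs from $ab|xy$. With high probability $Quartet$ returns this topology, and because all four sequences are error-independent, $CheckErrorIndependence$ accepts by Lemma~1 (the middle-edge estimate is correct). Finally $d_{T'}(a,b) \leq d(u,v) < 3g + 2g_{err} < 5g$, so $b$ lies within the inner loop's search radius.

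For the second case (no overlap implies return false), assume no edge of any $T' \neq T(x)$ in $F$ lies on the $x\to y$ path. Any $a \in S$ belongs to some such $T'$, and by hypothesis $T'$ meets the path in at most one vertex; consequently $T'$ attaches to the component of $T$ containing $\{x,y\}$ through a single ``bridge,'' with all of $T'$ on one side and $x, y$ on the other. For every $b \in T'$ this forces the true quartet on $\{a, b, x, y\}$ to be $ab|xy$. Applying Theorem~\ref{anc-quartet} to each error-independent pair tested by the inner loop, $Quartet(a,b,x,y) = ab|xy$ with high probability, so the conjunction in the \texttt{if} statement fails for every candidate and $CheckOverlaps$ returns false; a union bound over the $O(n)$ tested pairs preserves the $1 - o(1/n^2)$ success probability.

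The main obstacle I anticipate is the error-independence subtlety in the first case: the natural pair $(u, v)$ might share reconstruction history and fail to be error-independent in the order used to build the ancestral sequences, so one cannot directly invoke the four-point method on them. I would handle this by walking $b$ past $v$ into $T'$ until error-independence with $u$ is restored; because any new $b$ still lies beyond $u$ on (or branches off from) the $x\to y$ path, the true quartet remains of the form $xu|by \neq ub|xy$, and Invariant~3 together with $d(x,y) < 3g + 2g_{err}$ keeps $d_{T'}(a,b)$ strictly below $5g$ throughout the walk.
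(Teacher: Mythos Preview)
Your approach is essentially the paper's: in the overlap case, produce a nearby witness $a\in T'$, a partner $b$ giving a non-$ab|xy$ quartet, and appeal to \emph{CheckErrorIndependence}; in the no-overlap case, argue every tested quartet is $ab|xy$. Using the entry and exit vertices $u,v$ of $T'$ along the $x\to y$ path is a clean way to instantiate what the paper leaves to an unspecified ``case analysis.''

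The substantive gap is in your treatment of error-independence. You assert that ``all four sequences are error-independent,'' but you only argue the pairs $(a,b)$ (via the walk) and $(x,y)$ (by hypothesis). The cross-pairs such as $(a,y)$ and $(b,x)$ are exactly what can bias the middle-edge estimate and make \emph{CheckErrorIndependence} return \emph{false} on your chosen pair, in which case \emph{CheckOverlaps} does not return \emph{true} on it and you have exhibited no alternative. The paper handles this by insisting that the witness $a$ itself be chosen error-independent of the relevant endpoint, and then that $b$ (after a shift of at most $2g$ from the natural $b'$) is independent of all three remaining sequences, so that the full quartet passes the test. Your entry/exit construction can likely be salvaged along the same lines, but the walk on $b$ alone does not suffice.

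A minor point: your arithmetic conflates true path distances with the reconstructed-sequence distances that \emph{FindAllClose} and the $5g$ inner-loop cap actually use; whether $u$ is actually returned by $FindAllClose(x,2g+2g_{err})$ and whether the walked $b$ stays under the $5g$ bound both depend on how $g_{err}$ compares to $g$ and deserve an explicit line of justification.
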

\begin{proof}
Suppose $T(x)$ overlaps with some other subtree $T'$. One can easily show by case analysis that there exists a reconstructed sequence  in $T'$ that is within distance $2g+2g_{err}$ from $x$ or $y$ and such that the two sequences are independent. Since $T(x)$ overlaps with $T'$, there has to exist a node $b'$ in $T'$ such that the induced topology on $x,y,a,b'$ is $ax|b'y$ (w.l.o.g. we assume that $a$ forms a clade with $x$. The reconstructed sequence at $b'$ need not be independent from $a$, $x$ and $y$, but then there must be a node $b$ within distance $2g$ from $b'$ whose reconstructed sequence is independent of the other sequences, which means $ax|by$ will pass the independence test. 
On the other hand, if $T(x)$ does not overlap with any other subtree, all quartet queries that pass the independence test will return $ab|xy$.
\end{proof}

Note that searching for $b$ can be done by breadth-first search on $T(a)$, without resorting to the nearest neighbour search algorithm.

\subsection{Three-way ancestral sequence reconstruction}

In order to connect edges from different subtrees, we need to have independent sequence reconstructions at both ends of the new edge. To ensure this can be achieved, we will maintain, where possible, three separate sequence reconstructions at each internal node of the subtree, each based on two subtrees of $T(x)$ created by removing $x$, but independent of the third subtree. For any edge $e$ in a subtree, we refer to the two independent sequences at its ends as {\it companions}.

Theorem \ref{anc-reconstruction} is only applicable when all subtree edges have length less than $g$.  For a subtree with a long edge, we treat the subtree as rooted at a node on the long edge with a single sequence reconstruction.  When that node joins to another tree, we then create new three-way reconstructions; such a tree can only be joined with another tree via the long edge, in order to maintain Invariant 3.

The routine $ThreeWayReconstruction$ takes a tree with sequences reconstructed by maximum likelihood, adding to each vertex the two remaining reconstructions of its sequence. It must be started from a node that has no successors in  reconstruction order.

\begin{algorithm}
\caption{ThreeWayReconstruction($r$)}
\label{3way}
\begin{algorithmic}
\STATE Let $x,y,z$ be the neighbours of $r$.
\STATE Reconstruct sequences $\sigma_{xy}(r),\sigma_{yz}(r),\sigma_{xz}(r)$ conditioned on $\left\{x,y\right\}$,$\left\{y,z\right\}$ and $\left\{x,z\right\}$, respectively.
\STATE Let $S$ be the set of vertices in $T(r)$ with only one sequence reconstruction.
\STATE Visit vertices in $T(r)$ in Breadth-First Search order, reconstructing each sequence conditioned on all choices of $2$ neighbours. Stop branching if a node visited during an earlier call of $ThreeWayReconstruction$ is encountered.
\end{algorithmic}
\end{algorithm}

\subsection{Long edges must be joined}
Again, to maintain Invariant 3, when either of the two closest sequences is found in a subtree with a long edges, we must break that edge.  Because we will be finding the shortest pairwise distance between trees in $F$, we will certainly find one of the endpoints of the long edge, but we must not consider its other neighbouring edges.

Procedure $CandidateEdges(x)$ identifies valid edges that can be broken, given a node $x$ in the tree.

\begin{algorithm}
\caption{CandidateEdges($x$)}
\label{candedges}
\begin{algorithmic}
\IF{$x$ is the root of a tree $T$ in $F$} 
\STATE Return the set containing the edge $e$ that contains $x$.  (This may be a long edge.)
\ELSE
\STATE Return the set containing all edges in $T$ within distance $3g+2g_{err}$.  (This can be determined by a breadth-first search.)
\ENDIF
\end{algorithmic}
\end{algorithm}


\subsection{The full algorithm}
With these minor issues resolved, we can present the full algorithm.  

\begin{algorithm}
\caption{Reconstruct($\left\{ \sigma \right\}$,$f$,$g$)}
\label{a1}
\begin{algorithmic}
\STATE Start with a forest $F$ where each sequence is in its own tree.
\STATE Initialize hash tables for nearest neighbour search
\STATE Use \emph{FindAllClose} to identify all sequence pairs at distance less than $3g+2g_{err}$; put these in a  queue DistQueue.
\STATE $WaitList \gets \emptyset$
\WHILE{$F$ has more than one tree}
\STATE $(x,y) \gets DistQueue.pop()$
\IF{$T(x)=T(y)$}
	\STATE continue
\ENDIF
	\STATE $X\gets CandidateEdges(x)$
	\STATE $Y\gets CandidateEdges(y)$
        \STATE $Joins \gets X \times Y$.
		\IF{$X=\emptyset$ or $Y= \emptyset$}
			\STATE $WaitList.add((x,y))$
		\ENDIF
        \STATE Filter $Joins$ to only include pairs $((i,j),(k,l))$ where $Independent(i,j|k,l)$ 
        \STATE Use \emph{Quartet} and \emph{MiddleEdge} to find $d_{min}$, the smallest middle edge length among $Joins$.  Let it be the result of joining $(i^*, j*^)$ with $(k^*, l^*)$.  
		\STATE create an edge between the two edges $(i^*,j^*)$ and  $(k^*,l^*)$ that give rise to $d_{min}$
		\STATE use $MiddleEdge$ to calculate the lengths $d_1,\ldots,d_5$ of the five new edges
		\IF{$\max_i d_i \geq 2g$ or $d_i>d_j>g$ for some $i \neq j$ or $CheckOverlaps(x,y)$}
			\STATE undo this loop iteration
			\STATE $WaitList.add((x,y))$
		\ENDIF
        \IF {the new tree has a long edge}
		\STATE create a root on the new long edge
                \ELSE
		\STATE reconstruct the sequence at  the new internal nodes $r_1,r_2$
		\ENDIF
		\IF{the new tree has no long edge}
			\STATE $ThreeWayReconstruction(r_1)$
		\ENDIF

        \STATE Use \emph{FindAllClose} to add all newly-created sequence pairs whose distances are below $3g+2g_{err}$ to \emph{DistQueue}.
		\STATE For all hits $(x,y)$ in \emph{WaitList} at distances less than $3g+2g_{err}$ from any of the newly created sequences, move $(x,y)$ from \emph{WaitList} to \emph{DistQueue}

\ENDWHILE
\end{algorithmic}
\end{algorithm}

\begin{lemma}
At any point during the execution of the algorithm, there always exists a pair of subtrees that can be merged.
\end{lemma}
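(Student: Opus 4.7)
The plan is to exhibit a mergeable pair structurally, by analyzing the quotient tree $M=T/F$ obtained by contracting each subtree of $F$ to a single meta-node. I would first check that $M$ is a tree: since $T$ is a tree and each subtree in $F$ is connected, contraction produces no cycles, and parallel edges cannot arise either --- if an uncovered internal node $v$ of $T$ had two neighbors in the same subtree $T_i\in F$, the unique path between them in $T$ would pass through $v\notin T_i$, violating the connectivity of $T_i$. Every leaf of $T$ lies in some subtree of $F$ by Invariant~1, so every meta-leaf of $M$ is a subtree meta-node, and every uncovered internal node retains degree $3$ in $M$.

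I would then root $M$ at an arbitrary non-leaf meta-node and inspect the deepest internal meta-node $v$, whose children in the rooting must all be meta-leaves. Three subcases arise. If $v$ is uncovered, it has exactly two children, both subtree meta-leaves, giving a cherry whose two leaves are at meta-distance $2$ and hence true-tree distance at most $2g$. If $v$ is a subtree meta-node of degree at least $3$, it has two leaf children and we can pair $v$ with either child at meta-distance $1$, true-tree distance at most $g$. Finally, if $v$ is a subtree meta-node of degree $2$, its unique leaf child is meta-adjacent to $v$ itself, again a distance-$1$ mergeable pair. In every case we obtain a pair of subtrees whose boundary points lie at true-tree distance at most $2g$, well within the $3g+2g_{err}$ threshold used by \emph{FindAllClose} and therefore discoverable by the algorithm.

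It remains to verify that this pair can be merged without violating the invariants. Invariants~1 and~2 hold by construction, since the connecting path in $T$ is itself a subtree of $T$ and lies outside every other subtree of $F$ by the meta-cherry choice. Invariant~3 is preserved because the new middle edge has length at most $2g$; the algorithm's explicit safeguards $\max_i d_i\ge 2g$ and $d_i>d_j>g$ for $i\neq j$ prevent creating a tree with two long edges or with an edge exceeding $2g$. Invariant~4 follows from Theorem~\ref{anc-quartet} combined with \emph{CheckErrorIndependence}, which together certify that the merged subtree's reconstructed path lengths match the true-tree distances exactly.

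The main obstacle is the structural cherry argument: ruling out pathological quotient trees that admit no close pair of subtrees. The resolution rests entirely on the observation that uncovered internal nodes retain degree $3$ in $M$, which forces the deepest internal node to witness either a cherry of subtree meta-leaves at distance $2$ or a degree-$2$ subtree meta-node adjacent to a leaf subtree at distance $1$; in particular this rules out the ``long path of degree-$2$ uncovered nodes'' configuration that would otherwise push the bound beyond the hashing threshold.
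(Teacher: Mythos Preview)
Your quotient-tree/cherry argument is a clean way to locate two subtrees whose boundary points in $T$ lie within distance $2g$ of each other, and that part is sound. The gap is in the last step: you must show this pair is actually \emph{mergeable} by the algorithm, and here the long-edge constraint (Invariant~3) is not handled. Your sentence ``the algorithm's explicit safeguards $\max_i d_i\ge 2g$ and $d_i>d_j>g$ \ldots\ prevent creating a tree with two long edges'' is backwards: those safeguards cause the algorithm to \emph{reject} the merge and push the pair onto \emph{WaitList}. To prove existence of a mergeable pair you must exhibit one that \emph{passes} the safeguards, not merely note that the safeguards exist.

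Concretely, suppose your cherry yields subtrees $T_1,T_2$ that touch at an uncovered node $v$, but $T_1$ already contains a long edge far from its boundary with $v$. If the new middle edge has length $>g$ (which your bound of $2g$ allows), the merged tree would carry two long edges and the algorithm would undo the iteration; alternatively, if $T_2$ also has a long edge, the same happens regardless of the new edge lengths. Your argument produces no alternative pair in this situation. The paper's proof confronts exactly this obstruction: it removes all ``blocked'' pairs and their connecting paths from $T$, passes to a component $T'$ bordering only one removed subtree, and then argues inductively on the \emph{antitree} forest $F^c=T'-F$. Each antitree is a single edge or contains a cherry, yielding a candidate pair at distance $<2g$; if that pair is blocked by a long edge in one of its subtrees, the root on that long edge lies in a strictly different antitree, and the induction terminates. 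Your quotient-tree framework could likely be adapted to carry this induction, but as written it stops one step short of the crux.
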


\begin{proof}
There are two cases when two subtrees with internal nodes within distance $2g$ from each other cannot be merged. One is when one of the hits is not a root and $ThreeWayReconstruction$ has not been called on its subtree. The other is when merging two trees would give rise to a tree with two long edges. 

Let us consider the second case first. For each two subtrees $T_1,T_2$ with that property, remove them from $T$ together with the path connecting them. This gives rise to a forest $F_1$ where at least two components border exactly one of the removed subtrees. Call one such component $T'$ and let $T''$ be its unique adjacent removed subtree. Let $e'$ be the edge between $T'$ and $T''$.

If $T'$ contains a reconstructed subtree that is incident to $e'$ and can be joined with the subtree at the other end of $e'$, we are done. Otherwise, we will show that there exist two subtrees in $T'$ that can be merged. Consider the forest $F^c=T'-F$. We will refer to components of $F^c$ as {\it antitrees} to avoid confusion with reconstructed trees from $F$. It is easy to see that since $F$ cannot contain leaves that are internal nodes in $T$, each antitree in $F^c$ is either a single edge or contains a cherry. Therefore, each antitree in $F^c$ contains two leaves at distance less than $2g$. 

Take any such two leaves $x$ and $y$ from antitree $T^c_1$. If they cannot be joined, then at least one of the trees (say $T(x)$) has a long edge that doesn't include $x$. The root $r$ at this long edge belongs to an antitree $T^c_2$ that must be different from $T^c_1$ (otherwise we would have a cycle in $T$). $T^c_2$ has at least one pair of leaves within distance $2g$. If they cannot be merged, then one of its leaves is incident to another tree in $F$ with a long edge. Its root is incident to another tree $T_3^c$ in $F^c$. The claim holds by induction on the size of $F^c$.
\end{proof}


\begin{theorem}
Each iteration of the {\bf while} loop maintains invariants 1-4.
\end{theorem}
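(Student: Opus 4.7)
The plan is to fix any iteration and analyze the two cases. Either the iteration terminates by undoing itself (triggered by $\max_i d_i \geq 2g$, by two of the new edges exceeding $g$, or by \emph{CheckOverlaps} returning true), in which case $F$ is unchanged and the invariants hold trivially, or the iteration actually merges two trees $T(x)$ and $T(y)$ into a new tree via a new internal edge. Only the nontrivial case needs work.

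I would establish the invariants in the order 4, 3, 2, 1, since the later ones build on the earlier. Invariant 4 reduces to showing that the five new edge lengths $d_1,\ldots,d_5$ produced by \emph{MiddleEdge} are exact. By the induction hypothesis each parent tree has all its path lengths reconstructed correctly, and once \emph{CheckErrorIndependence} passes, the sequences used are error-independent; Theorem~\ref{anc-quartet} then gives estimates correct within $\Delta/2$, and since the true lengths are multiples of $\Delta$ the estimates are exact with high probability. Paths in the merged tree decompose as a path in one parent tree, a subset of the new edges, and a path in the other parent tree, so Invariant 4 follows. Invariant 3 is then immediate from the algorithm's own guard: after the checks for $\max_i d_i \geq 2g$ and for two of the $d_i$ exceeding $g$, the merged tree has at most one edge of length in $(g,2g]$, namely the unique long edge.

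For Invariant 2 I would invoke the lemma about \emph{CheckOverlaps}: if the proposed new edge merging $T(x)$ and $T(y)$ would pass through another subtree $T'\in F$, that lemma guarantees \emph{CheckOverlaps}$(x,y)$ returns true and the iteration is undone. Invariant 1 follows from the four-point method identifying the correct join: filtering $Joins$ by error-independence, together with Theorem~\ref{anc-quartet}, implies that the chosen pair $((i^*,j^*),(k^*,l^*))$ is the truly correct location in $T$ at which $T(x)$ and $T(y)$ meet, so the merged object is a subtree of $T$, consistent with the already-established Invariants 2 and 3.

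The main obstacle will be the bookkeeping around error-independence at the moment the quartet queries are made. When \emph{CheckErrorIndependence}$(x,y,a,b)$ uses children $z_1,z_2$ in reconstruction order, those companions must actually exist and supply the required independence. This depends on \emph{ThreeWayReconstruction} having been invoked on every tree without a long edge, and on trees with a long edge interacting with the rest of $F$ only through their root, as enforced by \emph{CandidateEdges}. I would therefore state a separate structural lemma about the state of $F$ between iterations --- every long-edge-free tree carries three-way reconstructions at each internal node, and rooted trees expose only their root --- and maintain it as an auxiliary invariant alongside 1--4. With that auxiliary invariant in hand, Invariants 1--4 follow from the arguments sketched above.
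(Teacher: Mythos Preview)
Your proposal is correct and follows the same approach as the paper's proof, which is given only as a brief sketch attributing Invariants~1 and~4 to Theorem~\ref{anc-quartet} together with the independence checks, Invariant~2 to \emph{CheckOverlaps}, and Invariant~3 to the guard conditions on the $d_i$. Your elaboration---case-splitting on undo versus merge, and isolating an auxiliary structural invariant about three-way reconstructions and rooted long-edge trees---makes explicit exactly the bookkeeping that the paper's sketch leaves implicit.
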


\begin{proof}[Sketch]
Invariants 1 and 4 are maintained due to Theorem~\ref{anc-quartet} and the fact that every quartet query is checked for independence. Invariant 2 is maintained by $CheckOverlaps$. Invariant $3$ is maintained by the conditions on $d_{min}$ in the main loop.
\end{proof}

\subsection{Runtime analysis}
\label{rtime}
At each iteration of the {\bf while} loop, all the operations except nearest neighbour search (which is invoked a constant number of times per iteration) take constant time, given that the ratio $g/f$ is constant(and thus the number of nodes within distance $g$ of any node is a constant whose size is $O(2^{g/f})$. The complexity is therefore dominated by the use of the hash tables. The evolutionary distance $3g+2g_{err}$ corresponds to a Hamming distance of at most
$$
h=\frac{1}{2}-\frac{1}{2}(1-2p_g)^3(1-2p_{err})^2
$$where $p_g=\frac{1-e^{-2g}}{2}$ and the bound on $p_{err}$ is given by Theorem~\ref{fitch-rec}. $FindAllClose$ is used a constant number of times per loop, so the running time of each iteration of the loop is $O(n^{2h+\epsilon}\log^2 n)$ . The loop is run $O(n)$ times, since the number of sequences within distance $3g+2g_{err}$ of any sequence is constant. Overall, the runtime is bounded by

$$
Cn^{2-(1-2p_g)^3(1-2p_{err})^2+\epsilon}\log^2 n < Cn^{2-(1-2p_g)^3\frac{(1-4g)(1-8g)}{(1-2g)^4}+\epsilon}\log^2 n<Cn^{2-(1-2p_g)^3(8g-1)^2+\epsilon}\log^2 n
$$which is always $o(n^2)$. Table 1 shows the runtime for selected values of $p_g$.

\begin{center}
\begin{table}
\caption{The approximate runtime for different values of $p_g$}
\centering 
\begin{tabular}{|c|c|}

		$p_g$ & runtime  \\ \hline
		0.01 & $n^{1.10}\log^2 n$  \\ \hline
      0.02 & $n^{1.19}\log^2 n$  \\ \hline
      0.05 & $n^{1.47}\log^2 n$   \\ \hline
		0.075 & $n^{1.67}\log^2 n$   \\ \hline
		0.10 & $n^{1.85}\log^2 n$   \\ \hline
		\end{tabular}
    \end{table}
\end{center}

\section{Experiments}
\subsection{A practical algorithm}

Many assumptions made by the above theoretical algorithm cannot be met in practice. The number of hash tables required for \emph{FindAllClose} to work with high probability requires a prohibitive amount of memory  on a standard desktop computer. Using maximum likelihood for ancestral sequence reconstruction also requires a large amount of memory to store conditional probabilities. This motivated us to develop a simpler and more memory-efficient practical algorithm.

Our implementation uses a number of hash tables required to find near neighbours with constant probability, not high probability. For reasons of memory efficiency, we also do not perform three-way reconstruction; instead, we join non-root nodes of different subtrees without requiring the sequences in quartet queries to be error-independent. Note that we still require error-independent sequences for estimating branch lengths in an existing subtree. After two subtrees are joined, the sequences in the smaller subtree are re-estimated according to an ordering compatible with that of the larger tree. 

For simplicity, the practical algorithm does not use routines \emph{CheckOverlaps} and \emph{CheckErrorIndependence}. Instead, we perform a small number of Nearest Neighbour Interchange (NNI) operations after each join, to ameliorate the problems originally addressed by these two functions. After an edge has been added, we re-estimate the length of all edges whose length might have been affected by the merge. If any quartet gives a topology that is not consistent with the edge, we perform an NNI operation to fix the topology and re-estimate the lengths of adjacent edges. While this process might repeat several times, it is not equivalent to traditional local search algorithms using NNI's, as only the edges in the close vicinity of the new edge are affected and the associated computational cost is much lower. 

The algorithm tries to merge pairs of trees, starting from collisions with the lowest estimated evolutionary distance. If no collisions are found within distance $(r_1+r_2)/2$, a new hash table is added, until the maximal number of $2n^{r_1/r_2}$ hash tables is reached.

The current version of our algorithm does not attempt to find optimal LSH parameters $r_1$ and $r_2$. In our experiments, we set them to $0.2$ and $0.6$, respectively. This choice leads to memory inefficiency for trees with very short edges, as we will see later. We leave the automatic adjustment of these values for future work.

\subsection{Data sets}

We used a data set from our previous paper, where we presented QTree~\cite{Almob}, to compare our new algorithm with two other fast phylogenetic algorithms, Neighbour Joining and FastTree. We simulated 10 trees on 20000 taxa from the pure-birth process. We then multiplied the length of each branch by a factor chosen uniformly at random from interval $[0.5,2]$ to deviate the trees from ultrametricity. This methodology follows the previous work of Liu {\it et al.}~\cite{SATE}. We then scaled the branch lengths of the entire tree by several choices of constant factors. For each choice of tree and scaling factor, we generated alignments whose length varied between 250 and 4000 positions. No indels were introduced in the simulation.


\subsection{Results}

Figure 2 shows the performance of our algorithm, compared with QTree and FastTree. We use the Robinson-Foulds metric, defined as the fraction of splits in the true tree that are found in the reconstructed tree. To ensure fair comparison, we only ran the Neighbour Joining phase of FastTree, without its concluding local search phase.

Our algorithm achieves higher accuracy than both FastTree and QTree in most settings. The main exception is trees with long branches, where its accuracy is substantially lower than both QTree and FastTree. The poor performance of our algorithm for trees with long branches is not surprising given that it relies so heavily on reconstructed ancestral sequences, whose accuracy diminishes as branch lengths approach the phase transition. For very short sequences, FastTree appears somewhat more accurate, possibly because of aggregating information from a greater number of distance estimates. 

\subsection{Running times and scalability}

On most instances, our program runs in times competitive with FastTree and somewhat longer than QTree (see Table 2). We believe the running times could be improved by a more careful choice of parameters, and note that our work is a preliminary prototype.

For 32-bit machines with up to 4GB RAM, our program does not scale to alignments larger than $2 \cdot 10^7$ letters. This is mostly due to the amount of memory required to store probability vectors for maximum likelihood, but also due to the hash tables. Memory usage may also increase if the number of collisions is high, since these are stored in a priority queue.

For trees where average branch length is very low compared to the $r_2$ parameter, vast numbers of collisions are generated, which leads to a substantial increase in running time and memory usage. We partially mitigate this problem by discarding all but top $k$ hits from each hash table entry, but the increase in running time is still substantial, sometimes increasing by $3$-fold compared to the normal scenario. We plan to solve this problem by supporting longer hash table keys and automatically choosing $r_2$ in the final version of the software.
\newpage
\begin{table}
\label{runtimescomparison}
\caption{The running times of the three algorithms for three representative data sets. In most cases, our algorithm is faster than FastTree, but slower than QTree. For very short branches, the number of hash table collisions is very high due to $r_2$ being too large, which results in a longer running time for our algorithm.}
\begin{tabular}{|c|c|c|c|}
		algorithm & $scale=25,seqlen=1000$& $scale=50,seqlen=1000$ & $scale=100,seqlen=1000$ \\ \hline
		QTree &  4m57s & 5m39s&  6m28s\\ \hline
      Our algorithm & 24m49s & 8m27s &  6m50s\\ \hline
      FastTree (NJ phase only) & 10m31s & 11m01s&  11m23s\\ \hline
		\end{tabular}
\end{table}
\begin{figure}[h!]
\vspace{-4cm}
\label{accgraphs}
\includegraphics[width=6in]{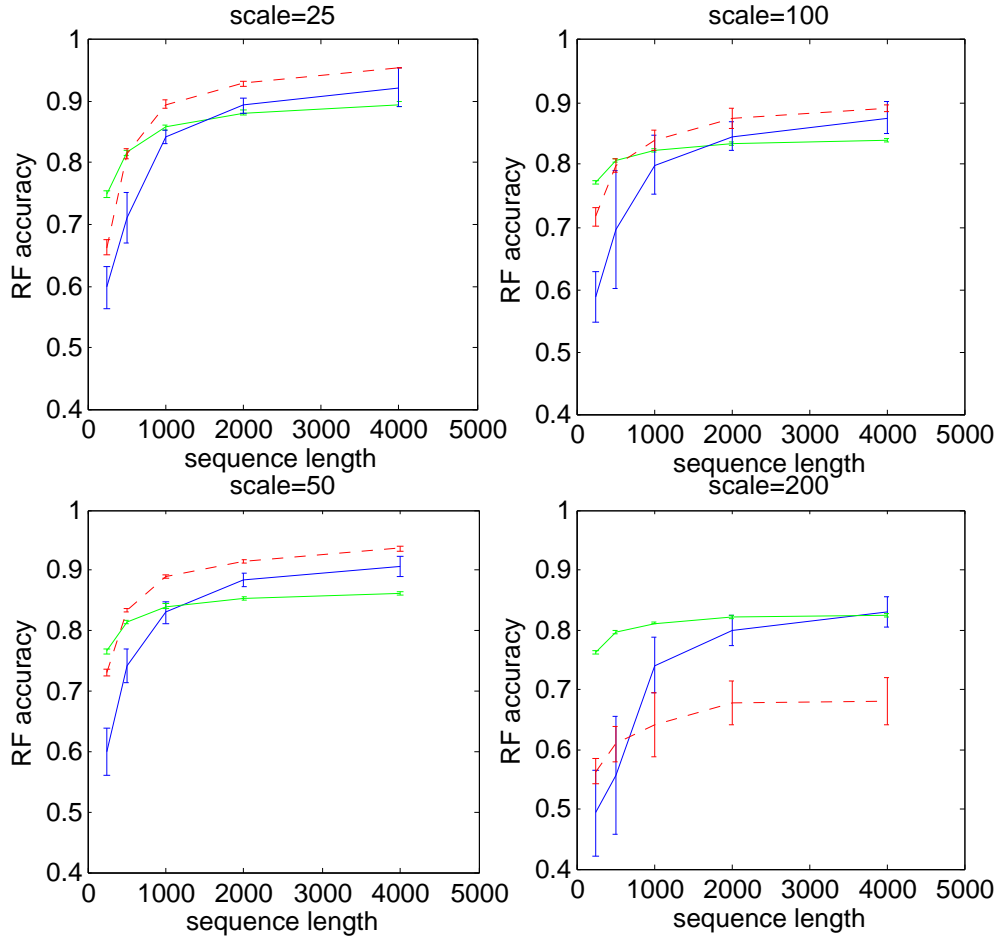}
\vspace{-3cm}
\caption{The performance of the LSH algorithm(red, dashed) compared to QTree(dark blue), and FastTree(light green),  as a function of the length of the sequences. The four graphs represent the performance on 10 tree topologies with branch lengths scaled by constant factors $25,50,100$, and $200$. The accuracy of the LSH algorithm is superior to both QTree and FastTree in most settings, except for phylogenies with very long branches (scale=200), where the LSH algorithm performs substantially worse than the other two, presumably due to poor ancestral sequence reconstruction.}
\end{figure}


We also ran our program on the larger simulated 16S data set with 78000 sequences from the FastTree paper~\cite{fasttree}. We created smaller data sets by randomly sampling 20000 and 40000 sequences from the full data set. For the data set with 20000 sequences, our algorithm took 15 minutes, compared with 9 minutes for both FastTree and QTree. For the data set with 40000 sequences, our algorithm took 56 minutes, compared with 26 minutes for FastTree and 19 minutes for QTree. We think that the runtime of our algorithm was impacted by the wrong choice of $r_2$, as in previous experiments. Our program ran out of memory on the full data set. The accuracies of the algorithms behaved similarly to the other data sets, with our algorithm being more accurate than QTree and FastTree.

\section{Conclusions and future work}
\label{conclusion}
We have presented a fast theoretical algorithm that correctly reconstructs phylogenies whose branch lengths are short enough. This theoretical algorithm shows the possibility of reconstructing such large phylogenies in sub-quadratic time from short sequences, without compromising the accuracy. Our prototype implementation achieves accuracies that are comparable or exceed existing algorithms, while also offering competitive running times for  instances of a few tens of thousands of taxa. We believe that both the accuracy and the running time of the algorithm could be improved further.

This work could be improved in several ways. On the practical side, we plan to improve the scalability of the algorithm by using a more flexible and memory-efficient hash table implementation. The applicability of our algorithm to diverse evolutionary scenarios will require the ability to set the hash table parameters automatically. We also plan to investigate how the runtime of LSH algorithms could be improved by taking advantage of rate variability across sites, which may also offer opportunities for higher accuracy in cases where long branches are present.

Some theoretical questions also remain. Felsenstein's algorithm is known to have optimal probability of reconstructing ancestral states correctly, but this probability appears hard to estimate (see e.g.~\cite{MaLouxin}). Getting tighter bounds on reconstruction accuracy  would lead to an improved running time of our algorithm. Another avenue for improvement is using a faster locality-sensitive hashing scheme. Dubiner~\cite{Dubiner} has recently proposed such a scheme for very long sequences, but it is not clear whether it will be useful with sequences of only logarithmic length. 

\bibliographystyle{splncs03}
\bibliography{full_quartets4}
\appendix
\section{The effect of reconstruction errors on locality-sensitive hashing}

If two reconstructed ancestral sequences are not error-independent, the distance estimate between them might be biased. Here, we show that this has no effect on the running time of \emph{FindAllClose} and its ability to find all sequences within specified evolutionary distance. The following lemma is a direct consequence of Lemma 5.4 by Mihaescu {\it et al.}~\cite{Mihaescu}.

\begin{lemma}
If nodes $a$ and $b$ are not error-independent and their true evolutionary distance is $d$, $\hat{d}(a,b)<d+2g_{err}$ with high probability.
\end{lemma}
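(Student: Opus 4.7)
\section*{Proof Proposal}

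The plan is to reduce the claim to a statement about the expected disagreement rate $p(a,b) = \Pr[\hat\sigma_a(i)\neq\hat\sigma_b(i)]$ at a single site, and then pass to a high-probability statement via standard concentration over the $k = \Theta(\log n)$ sites. Let $p^\ast$ denote the Hamming distance corresponding to evolutionary distance $d + 2g_{err}$, i.e.\ $p^\ast = \tfrac{1}{2}(1 - (1-2p_d)(1-2p_{err})^2)$, which is exactly the limiting value of the disagreement rate in the \emph{error-independent} case described in Section~3.5 (the two reconstruction errors act as two extra edges each of length at most $g_{err}$). The core claim to establish is that $p(a,b)\le p^\ast$ regardless of whether the errors are independent.

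First I would set up the coupling between $\hat\sigma_a$, $\hat\sigma_b$ and the true states $\sigma_a$, $\sigma_b$ via $\hat\sigma_a = \sigma_a\,\epsilon_a$ and $\hat\sigma_b = \sigma_b\,\epsilon_b$ for $\pm 1$ error variables with $\Pr[\epsilon_x = -1] \le p_{err}$. In the error-independent case, $\epsilon_a$ and $\epsilon_b$ are conditionally independent of $(\sigma_a,\sigma_b)$ and of each other, giving $E[\hat\sigma_a\hat\sigma_b] = (1-2p_d)(1-2p_{err})^2$ and hence $p(a,b) = p^\ast$. In the error-dependent case one must analyze how the reconstructing subtrees $S_a$ and $S_b$ interact with the path between $a$ and $b$ in $T$: either $S_a$ and $S_b$ share leaves, or one of them contains leaves that lie on the far side of the other root. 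In both configurations, the shared information induces a \emph{positive} covariance between $\epsilon_a$ and $\epsilon_b$ (errors tend to co-occur, or $\hat\sigma_a$ is biased toward $\sigma_b$ through the Markov path), which can only inflate $E[\hat\sigma_a\hat\sigma_b]$ above its value under the independent coupling, and therefore can only decrease $p(a,b)$.

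The cleanest way to carry out this covariance comparison is exactly Lemma 5.4 of Mihaescu \emph{et al.}~\cite{Mihaescu}, which the statement of our lemma invokes by name; applying it yields $p(a,b)\le p^\ast$ as a statement about the single-site disagreement probability. Concentration is then immediate: the $k$ site-indicators $\mathbf{1}[\hat\sigma_a(i)\neq\hat\sigma_b(i)]$ are i.i.d.\ Bernoulli (different sites evolve independently under Cavender--Farris), so Hoeffding gives $|\hat p - p(a,b)| < \tfrac{1}{2}(p^\ast_{+} - p^\ast)$ with probability at least $1 - \exp(-\Omega(k))$, for any desired slack $p^\ast_{+}$ corresponding to distance $d + 2g_{err}$ plus an arbitrarily small amount. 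Choosing $k = c\log n$ with $c$ large enough, taking a union bound over all $O(n^2)$ pairs encountered during the algorithm, and finally applying the monotone transform $\hat d = -\tfrac{1}{2}\log(1-2\hat p)$, we conclude $\hat d(a,b) < d + 2g_{err}$ with probability $1 - o(n^{-3})$.

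The main obstacle is the monotonicity statement in the second step: showing rigorously that correlated reconstruction errors can only shrink, never inflate, the expected disagreement rate relative to the independent-error baseline. A direct case analysis quickly branches into many sub-cases for how $S_a$, $S_b$, and the $a$--$b$ path can interact, so the cleanest route is to quote Mihaescu \emph{et al.}'s Lemma 5.4 wholesale and spend the body of the proof on the concentration-and-transform step. If a self-contained argument were required, I would prove it by induction on the number of shared edges in $S_a\cap S_b$, repeatedly applying the conditional-independence factorization of the Markov process across a single removed shared edge and checking that each step moves $E[\hat\sigma_a\hat\sigma_b]$ in the correct direction.
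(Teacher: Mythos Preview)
Your proposal is correct and follows essentially the same route as the paper: the paper's entire proof is the single sentence ``a direct consequence of Lemma~5.4 by Mihaescu \emph{et al.}'', and you likewise identify that lemma as the load-bearing step, supplementing it with the (routine) Hoeffding concentration over the $\Theta(\log n)$ i.i.d.\ sites and the monotone distance transform. Your heuristic about positive covariance of the error variables is a bit loose as written (the relevant quantity is $E[\sigma_a\sigma_b\epsilon_a\epsilon_b]$, and correlation of $\epsilon_a,\epsilon_b$ alone does not obviously control it), but you correctly flag this as the delicate point and defer it to the cited lemma, exactly as the paper does.
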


This means that the lack of error-independence will not cause LSH to miss internal nodes within specified evolutionary distance. On the other hand, it could happen that the bias from error-dependence generates additional hits in hash tables. The following lemma shows that the number of additional hits is at most $\log n$.
\begin{lemma}
Let $1/2 - 1/2(\exp(-2cf\log \log n)$ with $c<\ln 2$ in the LSH algorithm. The number of sequences $b$ such that $d(a,b)>cf\log\log n$ and $d(a,A(b))<cf\log\log n$ is at most $\log n$.
\end{lemma}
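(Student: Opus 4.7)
The plan is to convert the hypothesis $d(a, A(b)) < cf\log\log n$ into a bound on the true evolutionary distance $d(a,b)$, and then count how many vertices of the phylogeny can lie within that true distance, exploiting the minimum edge length $f$.

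First I would control the bias introduced by reconstruction. By the preceding lemma (applied symmetrically, or by reusing the Markov argument of Mihaescu et al.~\cite{Mihaescu}), the reconstruction errors at the two endpoints act as independent extra edges of length at most $g_{err}$ each. Consequently, $|d(a, A(b)) - d(a, b)| \le 2 g_{err}$ holds with high probability. Plugging in the hypothesis, the true tree distance satisfies $d(a, b) < cf\log\log n + 2 g_{err}$.

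Next I would use the tree's combinatorial structure. Since every edge has length at least $f$, any evolutionary path of length at most $cf\log\log n + 2 g_{err}$ traverses at most $c\log\log n + O(1)$ edges. Each internal node of a phylogeny has degree $3$, so the number of vertices within graph distance $k$ of $a$ is at most $3 \cdot 2^{k-1}$. Substituting $k = c\log\log n + O(1)$ gives at most $O(2^{c\log\log n})$ candidate vertices. Interpreting $\log$ as $\ln$, we have $2^{c\log\log n} = (\log n)^{c \ln 2}$, and the hypothesis $c < \ln 2$ yields exponent $c\ln 2 < (\ln 2)^2 < 1$. Thus the count is $o(\log n)$, and in particular at most $\log n$ for sufficiently large $n$.

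The main obstacle is the first step. The preceding appendix lemma bounds only the overestimation of $\hat{d}$, whereas the counting argument also needs to rule out significant underestimation caused by correlated reconstruction errors. To close this gap I would argue directly about Hamming distances: each reconstruction error at an endpoint is a Bernoulli event with failure probability at most $p_{err}$, and its contribution to the Hamming distance between reconstructed sequences decomposes, up to a small correction, as an independent extra edge of length at most $g_{err}$, regardless of sign. A Chernoff-type concentration bound, analogous to the one underlying the first appendix lemma, should then give $d(a, b) - d(a, A(b)) \le 2 g_{err}$ with probability $1 - o(n^{-3})$, which is exactly what the counting step requires.
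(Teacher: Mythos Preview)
Your proposed fix for the ``main obstacle'' does not work, and this is the crux of the lemma. You want the lower bound $d(a,A(b)) \ge d(a,b) - 2g_{err}$, but the independent-extra-edge picture is precisely what fails when $a$ and $b$ are not error-independent. The event ``$A(b)$ disagrees with $\sigma_b$'' is indeed a Bernoulli event of probability at most $p_{err}$, but in the dependent case it is \emph{correlated with the state at $a$}: if the subtree used to reconstruct $b$ overlaps the path toward $a$, the reconstruction at $b$ is systematically pulled toward the state at $a$, so errors tend to \emph{decrease} the Hamming distance rather than act like symmetric noise. No Chernoff argument over sites repairs this, because the bias is in the mean, not in the concentration. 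Consequently there is no constant $C$ for which $d(a,b)-d(a,A(b))\le C$ holds in general, and your counting step never gets off the ground.

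The paper takes a different route that sidesteps this bias entirely. It uses the percolation representation of the Cavender--Farris process: declare each edge open with probability $1-2p(e)$, so that vertices in distinct open clusters have independent states. If $d(a,b)$ exceeds roughly $f\log\log n$, the probability that $a$ and $b$ lie in the same open cluster is at most $(\log n)^{-2f/\ln 2}$; conditioned on a closed edge separating them, the reconstructed state $A(b)$ is independent of the state at $a$, forcing the expected normalized Hamming distance up to essentially $1/2$. Choosing $c<\ln 2$ keeps this bounded away from $r_2$, so such $b$ are not hits for the LSH at all. The bound of $\log n$ then comes from the degree-three counting you already did: only the vertices within true distance $O(f\log\log n)$ can possibly appear, and there are at most $\log n$ of those. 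The percolation step is the missing idea in your proposal.
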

\begin{proof}
We use a well-known equivalent formulation of Markov chain on trees as a percolation process. For each edge in $T$, we set if to {\it open} with probability $1-2p$ and {\it closed} otherwise. Each connected component of open edges shares the same state and states in different components are independent. Notice that conditioned on there being a closed edge between $a$ and $b$, a reconstruction error in $b$ is independent of the state at $a$. If the true evolutionary distance between $a$ and $b$ is at least $f\log\log n$, the probability of them being in the same component is at most $(\log n)^{-2f}$. Consequently, the expected normalied Hamming distance between $a$ and $b$ is at least$$
\frac{1}{2}-(\log n)^{\frac{-2f}{\ln 2}}+(\log n)^{-2f}
$$
Picking $c<\ln 2$ ensures that this is bounded away from $r_2$ for $n$ large enough. This, together with the fact that there are at most $\log n$ sequences within distance $f\log\log n$ of $a$, concludes the proof.
\end{proof}


\end{document}